\DeclareMathOperator*{\argmax}{arg\,max}
\DeclareMathOperator*{\argmin}{arg\,min}
\DeclareMathOperator{\dist}{d}
\DeclareMathOperator{\rank}{rank}
\DeclareMathOperator{\MST}{MST}
\DeclareMathOperator{\TSP}{TSP}
\renewcommand{\epsilon}{\varepsilon}
\DeclareMathOperator{\diversity}{div}
\newcommand{\BO}[1]{O\left( #1 \right)}
\newcommand{\BT}[1]{\Theta\left( #1 \right)}
\newcommand{\algo}[1]{{\scshape #1}}
\newtheorem{definition}{Definition}
\newtheorem{theorem}{Theorem}
\newtheorem{lemma}{Lemma}
\newtheorem*{lemma*}{Lemma}
\newtheorem{fact}{Fact}
\newcommand{\Let}[2]{#1 $\leftarrow$ #2}
\newif\ifshort
\begin{document}

\title{A General Coreset-Based Approach to Diversity Maximization under Matroid Constraints}

\author{Matteo~Ceccarello}
\email{mceccarello@unibz.it}
\orcid{0000-0003-2783-0218}
\affiliation{%
  \institution{Free University of Bozen}
  \city{Bolzano}
  \state{Italy}
}

\author{Andrea Pietracaprina}
\email{andrea.pietracaprina@unipd.it}
\author{Geppino Pucci}
\email{geppino.pucci@unipd.it}
\affiliation{%
  \institution{University of Padova}
  \department{Department of Information Engineering}
  \city{Padova}
  \state{Italy}
}

\renewcommand{\shortauthors}{Ceccarello, Pietracaprina, Pucci}

\begin{abstract}
Diversity maximization is a fundamental problem in web search and data
mining.  For a given dataset $S$ of $n$ elements, the problem requires
to determine a subset of $S$ containing $k\ll n$ "representatives"
which minimize some diversity function expressed in terms of pairwise
distances, where distance models dissimilarity. An important variant
of the problem prescribes that the solution satisfy an additional
orthogonal requirement, which can be specified as a matroid constraint
(i.e., a feasible solution must be an independent set of size $k$ of a
given matroid). While unconstrained diversity maximization admits
efficient coreset-based strategies for several diversity functions,
known approaches dealing with the additional matroid constraint apply
only to one diversity function (sum of distances), and are based on an
expensive, inherently sequential, local search over the entire input
dataset. We devise the first coreset-based algorithms for diversity
maximization under matroid constraints for various diversity functions,
together with efficient sequential, MapReduce and Streaming
implementations. Technically, our algorithms rely on the construction
of a small coreset, that is, a subset of $S$ containing a feasible
solution which is no more than a factor $1-\epsilon$ away from the
optimal solution for $S$. While our algorithms are fully general, for
the partition and transversal matroids, if $\epsilon$ is a constant in
$(0,1)$ and $S$ has bounded doubling dimension, the coreset size is
independent of $n$ and it is small enough to afford the execution of a
slow sequential algorithm to extract a final, accurate, solution in
reasonable time.  Extensive experiments show that our algorithms are
accurate, fast and scalable, and therefore they are capable of
dealing with the large input instances typical of the big data
scenario.

\end{abstract}

\begin{CCSXML}
  <ccs2012>
  <concept>
  <concept_id>10003752.10003809.10003636.10003812</concept_id>
  <concept_desc>Theory of computation~Facility location and clustering</concept_desc>
  <concept_significance>500</concept_significance>
  </concept>
  <concept>
  <concept_id>10003752.10003809.10010055</concept_id>
  <concept_desc>Theory of computation~Streaming, sublinear and near linear time algorithms</concept_desc>
  <concept_significance>500</concept_significance>
  </concept>
  <concept>
  <concept_id>10003752.10003809.10010170.10003817</concept_id>
  <concept_desc>Theory of computation~MapReduce algorithms</concept_desc>
  <concept_significance>500</concept_significance>
  </concept>
  </ccs2012>
\end{CCSXML}

\ccsdesc[500]{Theory of computation~Facility location and clustering}
\ccsdesc[500]{Theory of computation~Streaming, sublinear and near linear time algorithms}
\ccsdesc[500]{Theory of computation~MapReduce algorithms}

\keywords{Diversity Maximization, Matroids, Coresets, MapReduce, Streaming, Doubling Spaces, Approximation Algorithms}

% make the title area
\maketitle

\section{Introduction}

In many application domains, data analysis often requires the
extraction of a succinct and significant summary of a large dataset
which may take the form of a small subset of elements as diverse as
possible from one another.  The summary can be either presented to the
user or employed as input for further
processing~\cite{AbbassiMT13,MasinB08,wu2013,YangMNFCH15}. More
specifically, given a dataset $S$ of points in a metric space and a
constant $k$, \emph{diversity maximization} requires to determine a
subset of $k$ points of $S$ maximizing some diversity
objective function defined in terms of the distances between the
points. 

There are several ways of characterizing the diversity function.  In
general, the diversity of a set of $k$ points can be captured by a
specific graph-theoretic measure defined on the points, which are seen
as the nodes of a $k$-clique where each edge is weighted with the
distance between its endpoints~\cite{ChandraH01}. The diversity
functions considered in this paper are defined in
Table~\ref{tab:diversity-notions}. The maximization problems 
under these functions are all known to be NP-hard \cite{ChandraH01}
and the development of efficient approximation algorithms
has attracted a lot of interest in the recent literature
(see \cite{CeccarelloPPU17} and references therein).

An important variant of diversity maximization requires that the $k$
points to be returned satisfy some additional orthogonal constraint
such as, for example, covering a variety of pre-specified categories
attached to the data.  This variant has been recently investigated
under the name \emph{Diversity Maximization under Matroid Constraint}
(\emph{DMMC}) where the subset of $k$ points to be returned is
required to be an independent set of a given matroid
\cite{BorodinLY12,AbbassiMT13,CevallosEZ17}. (A more formal definition
of the problem is given in Section~\ref{sec:preliminaries}.)  As a
concrete example, suppose that $S$ is a set of Wikipedia pages, each
associated with one or more topics. A solution of the DMMC problem
identifies a subset of pages that are most diverse in terms of some
pre-specified distance metric (e.g., cosine distance
\cite{LeskovecRU14}) but also ``well spread'' among the topics. This
latter property can be suitably controlled by imposing a partition or
transversal matroid constraint, depending on whether topics overlap or
not.

In this paper, we contribute to this line of work and present novel
algorithms for diversity maximization under matroid constraint for the
diversity functions in Table~\ref{tab:diversity-notions}, both for the
traditional sequential setting and for big-data oriented computation
frameworks such as MapReduce \cite{Dean2004} and Streaming \cite{HenzingerRR98}.

\subsection{Related work}

Unconstrained diversity maximization has been studied for over two
decades within the realm of facility location (see \cite{ChandraH01}
for an account of early results). In recent years, several works have
devised efficient algorithms for various diversity maximization
problems in different computational frameworks. Specifically, in
\cite{CevallosEM18} PTAS's are devised in the sequential setting
for metric spaces of constant doubling dimension (a notion that will
be formalized in Subsection~\ref{sec:ddimension}). In
\cite{IndykMMM14,AghamolaeiFZ15,CeccarelloPPU17,EpastoMZ19}, MapReduce
and/or Streaming algorithms are proposed for several diversity
measures, which are based on confining the expensive computations
required by standard sequential algorithms to small subsets of the
input (\emph{coresets}), cleverly extracted so to contain high-quality
global solutions.  For general metric spaces, the algorithms in
\cite{IndykMMM14,AghamolaeiFZ15,EpastoMZ19}, require sublinear working
memory per processor at the expense of a constant worsening in the
approximation ratio with respect to the best ratio attained by
sequential algorithms, while, for metric spaces of constant doubling
dimension, the algorithms in \cite{CeccarelloPPU17} retain sublinear
space but feature approximation ratios that can be made arbitrarily
close to the best sequential ones.  Finally, unconstrained diversity
maximization is also studied in \cite{BorassiELVZ19} in the
sliding-window framework.

The literature on diversity maximization under matroid constraints is
much thinner.  Unlike the unrestricted case, existing approaches only
target the sum-DMMC variant (see Table~\ref{tab:diversity-notions}) in
the sequential setting, and are based on expensive local search
strategies over the entire input $S$. Specifically, both
\cite{BorodinLY12} and \cite{AbbassiMT13} present (1/2)-approximation
algorithms for sum-DMMC whose running times are at least quadratic in
the input size, hence impractical for large inputs.  In fact, these
algorithms guarantee polynomial time only at the expense of a slightly
worse approximation ratio $1/2 -\gamma$, for any fixed $\gamma >0$. In
the preliminary conference version of our work \cite{CeccarelloPP18},
sequential, MapReduce and Streaming algorithms for the sum-DMMC variant have
been presented, which achieve the same approximation quality 
as \cite{BorodinLY12,AbbassiMT13}. 
For metric spaces of constant doubling dimension,
the sequential algorithm is much faster 
than its competitors, while the MapReduce/Streaming algorithms
enable the processing of massive datasets 
in 2 rounds/1 pass. The
improvements of this present work over \cite{CeccarelloPP18} are
discussed in more detail at the end of the next subsection.

Recently, an extension of the above local search approaches for
sum-DMMC to non-metric spaces with negative-type distances has been
proposed in \cite{CevallosEZ17}. To the best of our knowledge, no
polynomial-time algorithms featuring nontrivial approximation
guarantees are known for the other DMMC variants listed in
Table~\ref{tab:diversity-notions}.

Finally, it has been proved that, under the widely accepted \emph{planted
  clique hypothesis}, it is hard to attain an approximation ratio
larger than 1/2 for sum-DMMC \cite{BorodinLY12,BhaskaraGMS16} (in fact, this
inapproximability result holds also for the unconstrained case). 

\subsection{Our contribution}

In this paper, we present coreset-based strategies which can be
employed to provide good approximations to all DMMC variants listed in
Table~\ref{tab:diversity-notions}, and which are amenable to efficient
implementations in the sequential, MapReduce and Streaming settings.
For all variants, the coreset constructions revolve around the same
key idea of clustering the input dataset into subsets of close-by
points, and then selecting suitable representatives from each subset,
depending on the matroid type. The essence of this idea was pioneered
in previous work on unconstrained diversity maximization
\cite{IndykMMM14,AghamolaeiFZ15,CeccarelloPPU17}.  More specifically,
for any DMMC instance and every chosen $\epsilon \in (0,1)$, our
approach builds a $(1-\epsilon)$-coreset, that is, a coreset which
contains a feasible solution to the instance whose diversity is within
a factor $(1-\epsilon)$ from the optimal diversity. The coreset size
is analyzed in terms of the matroid type, the size $k$ of the
solution, and the doubling dimension $D$ of the input dataset $S$ (see
Subsection~\ref{sec:ddimension} for a formal definition of doubling
dimension).  For constant $D$, our coreset constructions can be
implemented using work linear in $n$ and polynomial in $k$ and
$1/\epsilon$ in all three computational settings. It is important to
remark that while $k$ and $\epsilon$ are given in input together with
the dataset $S$, the value $D$ (hard to estimate in practice) 
is used in only in the analysis and needs not be
explicitly provided to the algorithms. The constructions can be
accomplished in MapReduce in one round with sublinear memory, and in
Streaming in a single pass with working memory proportional to the
(small) coreset size.  While our coreset constructions are fully
general, for the important cases of the partition and transversal
matroids and of datasets with constant $D$, the resulting coreset size
becomes independent of $n=|S|$. In the MapReduce setting, the coreset
size also depends on the degree of available parallelism $\ell$, which
can in turn be a function of $n$. However, a coreset size independent
of $n$ can always be achieved regardless of the value of $\ell$ in an
extra round, by performing a second (sequential) coreset construction
on the first corset.

In all three settings, once a $(1-\epsilon)$-coreset $T$ is computed
for $S$, the final solution can be obtained by running on $T$ a
sequential algorithm for the DMMC variant under consideration. For
sum-DMMC, running the algorithm of \cite{AbbassiMT13} on $T$ yields a
$(1/2-\BO{\epsilon})$-approximation, with total work (including the
coreset construction phase) which is linear in $n$ and, for constant
$D$, polynomial in $k$ and $1/\epsilon$. For the other DMMC variants
of Table~\ref{tab:diversity-notions}, since no nontrivial
polynomial-time approximations are known, we resort to running an
(exact) exhaustive search for the best solution on $T$. This approach
yields a $(1-\epsilon)$-approximation, with total work which is linear
in $n$ and, for constant $D$, polynomial in $1/\epsilon$ but
exponential in $k$. For small values of $k$, a range of definite
interest for real applications, ours are the first feasible algorithms
providing provably accurate solutions for these DMMC
variants. Finally, we remark that our MapReduce and
Streaming algorithms provide the first practically viable approaches to
the solutions of all DMMC variants in the big data scenario.

Our theoretical results are complemented with extensive experiments on
real-world datasets. For concreteness, we focus on the sum-DMMC
variant, the only variant for which there is a known sequential
competitor \cite{AbbassiMT13}. In the sequential and Streaming
settings, the experiments provide clear evidence that the accuracy scales
with the coreset size, in the sense that larger coreset sizes afford
solutions of higher quality. Moreover, for a given target accuracy,
both our sequential and Streaming algorithms run up to two orders of
magnitude faster than the pure local-search of \cite{AbbassiMT13}.  In
the MapReduce setting, the experiments show scalability of
performance with respect to the available parallelism.
In essence, the experiments confirm that, on sufficiently large
instances, the time for the extraction of the final solution, now confined to a small coreset  rather than the entire input, becomes
negligible with respect to overall running time, while
the running time is dominated by the highly-scalable, linear-work coreset
construction.

\vspace*{0.3cm}
\noindent
{\bf Novelty with respect to conference version.}  The novel results
of this work over those present in the preliminary conference version
\cite{CeccarelloPP18} are the following.  (a) The results have been
generalized to several diversity functions, through the introduction
of the notion of \emph{average farness} and the derivation of its relation to
the diameter of the dataset (Lemma~\ref{lem:diameter}). As a
consequence of this generalization, the current paper presents the
first feasible algorithms in the literature providing provably
accurate solutions for the DMMC variants associated to these
functions. (b) The streaming implementation of the coreset
construction is novel and simpler. More importantly, unlike to the one
presented in \cite{CeccarelloPP18}, it is oblivious to the doubling
dimension $D$ of the dataset. (c) An experimental analysis of the
streaming algorithm and an extensive comparison between all of our algorithms
have been added.

\subsection{Organization of the paper}
The rest of the paper is structured as follows. A formal definition of
the problem and some key concepts and notations are given in
Section~\ref{sec:preliminaries}. The coresets constructions are
described in Section~\ref{sec:coresets}, while their implementations
in the various settings and the resulting DMMC algorithms are
presented in Section~\ref{sec:implementation}. The experimental
results are reported in Section~\ref{sec:experiments}.
Section ~\ref{sec:conclusions} closes the paper with some final
remarks and open problems.

\section{Preliminaries}
\label{sec:preliminaries}

\subsection{Matroids}
Let $S=\{s_1, s_2, \dots, s_n\}$ be a set from a metric space with
distance function $\dist(\cdot,\cdot)$. Recall that $\dist$ is
nonnegative, symmetric, equal to 0 only on pairs of identical
elements, and obeys the triangle inequality. A
\emph{matroid}~\cite{Oxley06} based on $S$ is a pair $\mathcal{M} =
(S, \mathcal{I}(S))$, where $\mathcal{I}(S)$ is a family of subsets of
$S$, called \emph{independent sets}, satisfying the following
properties: (i) the empty set is independent; (ii) every subset of an
independent set is independent (\emph{hereditary property}); and (iii)
if $A \in \mathcal{I}(S)$ and $B \in \mathcal{I}(S)$, and $|A| > |B|$,
then there exist $x \in A\setminus B$ such that $B \cup \{x\} \in
\mathcal{I}(S)$ (\emph{augmentation property}). An independent set is
maximal if it is not properly contained in another independent set. A
basic property of a matroid $\mathcal{M}$ 
is that all of its maximal independent sets
have the same size, which is called the \emph{rank} of the matroid and
is denoted by $\operatorname{rank}(\mathcal{M})$.  In this paper we
concentrate on two well-known types of matroid, namely,
\emph{partition matroids} and \emph{transversal matroids}, which are
defined as follows.

\begin{definition}[Partition Matroid]
Consider a partition of $S$ into $h$ disjoint 
subsets $A_1, A_2, \dots, A_h$, and
let $k_i\leq |A_i|$ be a nonnegative integer, for $1\leq i\leq
h$.  Define $\mathcal{I}(S)$ as the family of subsets $X \subseteq S$ with $|X
\cap A_i| \le k_i$, for $1\leq i\leq h$.  Then, 
$\mathcal{M} =(S, \mathcal{I}(S))$ is a \emph{partition matroid} based on $S$.
\end{definition}

\begin{definition}[Transversal Matroid]
Consider a covering family $\mathcal{A}=\{A_1, \dots, A_h\}$ of (possibly
non-disjoint) subsets of $S$, that is, $S=\bigcup_{i=1}^{h}A_i$, and
consider the bipartite graph $(S, \mathcal{A}; E)$ where $E$ consists
of all edges $\{s_i, A_j\}$ with $s_i \in A_j$, for $1\leq i\leq n$
and $1\leq j \leq$.  Define $\mathcal{I}(S)$ as the family of subsets $X
\subseteq S$  corresponding to the left endpoints of some matching
in the above graph.  Then, $\mathcal{M} = (S, \mathcal{I}(S))$ is a
\emph{transversal matroid} based on $S$.
\end{definition}

In the following, the sets $A_1, A_2, \dots, A_h$ in the definition of
partition and transversal matroids will be referred to as
\emph{categories}. We make the reasonable assumption that for the
transversal matroids each element of the input belongs to a constant
number of categories. Also, without loss of generality, we assume that
each individual element of $S$ makes a singleton independent set. In
fact, elements for which this is not the case can be eliminated since,
by the hereditary property of matroids, they cannot belong to larger
independent sets.

\subsection{Problem definition}
Let $\diversity: 2^S \rightarrow \mathbb{R}$ be a \emph{diversity
  function} that maps any subset $X \subset S$ to some nonnegative real
number.  For a specific diversity function $\diversity$, a matroid
$\mathcal{M} = (S, \mathcal{I}(S))$, and a positive integer $k \leq
\operatorname{rank}(\mathcal{M})$, the goal of the \emph{Diversity
  Maximization problem under Matroid Constraint} (DMMC problem, for
brevity) is to find an independent set $X \in I(S)$ of size $k$ that
maximizes $\diversity(X)$.  We denote the optimal value of the
objective function as
\[
  \diversity_{k,\mathcal{M}}(S) = \max_{X \in \mathcal{I}(S), |X| = k} \diversity
(X)
\]
In this paper, we will focus on several instantiations of the
DMMC problem presented in Table~\ref{tab:diversity-notions}, characterized
by different diversity functions amply studied in the
previous literature\footnote{%
Observe that function
$\diversity(X) = \min_{u \neq v\in X} d(u, v)$,
also well studied in the literature for the unconstrained variant of the 
problem
is missing in the table since we were not able to 
obtain meaningful results for it. We will discuss this
issue in the conclusions.
} \cite{ChandraH01,IndykMMM14,AghamolaeiFZ15}.
Throughout the paper, 
the generic term ``DMMC problem''
will be used whenever a statement 
applies to all instantiations of Table~\ref{tab:diversity-notions}.
\newcommand{\definitionTableVerticalSpacing}{\rule{0pt}{10pt}}
\begin{table}[t]
  \centering
  \begin{tabular}{l|l}
    \toprule
    {\bf Problem}  & {\bf Diversity function} $\diversity(X)$
    \\\hline
    \definitionTableVerticalSpacing%
    sum-DMMC
      & $\sum_{u, v\in X} \dist(u, v)$ 
    \\
    \definitionTableVerticalSpacing%
    star-DMMC 
      & $\min_{c\in X}\sum_{u\in X\setminus\{c\}} \dist(c, u)$ 
    \\
    \definitionTableVerticalSpacing%
    tree-DMMC 
      & $w(\MST(X))$ 
    \\
    \definitionTableVerticalSpacing%
    cycle-DMMC
      & $w(\TSP(X))$
    \\
    \definitionTableVerticalSpacing%
    bipartition-DMMC
      & $\min_{Q\subset X, |Q|=\lfloor|X|/2\rfloor}\sum_{u\in Q, v\in X \setminus Q} \dist(u, v)$
    \\
    \bottomrule
  \end{tabular}
  \caption{% 
    Instantiations of the DMMC problem considered in this paper, with
related diversity measures.  $w(\MST(X))$ (resp.,
    $w(\TSP(X))$) denotes the minimum weight of a spanning tree (resp.,
    Hamiltonian cycle) of the complete graph whose nodes are the points of
    $X$ and whose edge weights are the pairwise distances among the
    points.  
  }
\label{tab:diversity-notions}
\end{table}

Returning to the example mentioned in the introduction, concerning a
set $S$ of Wikipedia pages, the covering of various topics, viewed as
categories, can be enforced by a partition matroid constraint when
each page is labeled by a single topic, or by a transversal matroid
constraint when pages may refer to multiple topics.
  
The algorithms presented in this paper use clustering as a
subroutine. For a given positive integer $\tau$, a $\tau$-\emph{clustering}
of $S$ is a pair $(\mathcal{C},Z)$, where $\mathcal{C}=\{C_1, \dots,
C_{\tau}\}$ is a partition of $S$, and $Z = \{z_1, \dots, z_{\tau}\} \subset S$
is such that $z_i\in C_i$, for $1\leq i\leq \tau$. Each $z_i$ is said to
be the \emph{center} of its respective \emph{cluster} $C_i$. We define
the \emph{radius} of the clustering as
\[
r(\mathcal{C},Z) = \max_{1\leq i\leq \tau}\max_{s \in C_i} \dist(s, z_i).
\]
The problem of finding a $\tau$-clustering of minimum radius is
NP-hard, as it is also NP-hard to achieve an approximation factor of
$2-\epsilon$ in general metric spaces, for any $\epsilon >
0$~\cite{Gonzalez85}. In the paper, we will make use of the well-known
sequential 2-approximation clustering algorithm of \cite{Gonzalez85}
(known as \algo{gmm} in the literature) as a key tool in both our
sequential and MapReduce algorithms for the DMMC problem.  Instead, in
the streaming setting we rely on a strategy reminiscent of the
seminal streaming clustering algorithm of \cite{CharikarCFM04}. In
fact, both \cite{Gonzalez85} and \cite{CharikarCFM04} limit themselves
to identifying a suitable set $Z = \{z_1,
\dots, z_{\tau}\}$ of centers which implicitly induce a clustering
$(\mathcal{C}=\{C_1, \dots, C_{\tau}\},Z)$ 
with the desired approximation quality, where 
each $C_i$ is the set of elements which are closer to $z_i$ than any
other center.

\subsection{Doubling dimension}
\label{sec:ddimension}
Our algorithms will be analyzed in terms of the dimensionality
of the dataset $S$, as captured by the well-established notion of
doubling dimension. Formally, for a given 
point $x \in S$, let the \emph{ball of radius $r$ centered at
  $x$} be the subset of points of $S$ at distance 
at most $r$ from $x$. The
\emph{doubling dimension} of $S$ is defined as the smallest value $D$ such
that any ball of radius $r$ centered at an element $x$ is covered by
at most $2^D$ suitably centered balls of radius $r/2$. Observe that the
doubling dimension of a dataset $S$ of size $n$
is upper bounded by $\log_2 n$. The algorithms that will be presented in this
paper adapt automatically to the doubling dimension $D$ of the input dataset and 
attain their best performance when $D$ is small, possibly constant. 
This is the case, for instance, of datasets $S$ whose points belong to
low-dimensional Euclidean spaces, or represent nodes of mildly-expanding
network topologies under shortest-path distances. 
\iffalse
However, there are notable
metric spaces of constant doubling dimension. In particular, the space
$\mathbb{R}^d$ paired with any $\ell_p$ distance (therefore including
Euclidean spaces of any dimension) has doubling dimension
$\BO{d}$~\cite{GuptaKL03}. 
\fi
The characterization of datasets (or metric spaces)
through their doubling dimension has been used in the literature in
several contexts, including routing~\cite{KonjevodRX08},
clustering~\cite{AckermannBS10}, nearest neighbour
search~\cite{ColeG06}, and machine learning~\cite{GottliebKK14}.

\section{Coresets}\label{sec:coresets}

The notion of \emph{coreset} has been introduced in \cite{AgarwalHV05}
as a tool for the development of efficient algorithms for optimization
problems on large datasets.  In broad terms, for a given computational
objective, a coreset is a small subset of the input which embodies a
feasible solution whose cost is a good approximation to the cost of
an optimal solution over the entire input.  Coreset constructions have
been successfully developed for the unconstrained diversity
maximization problem~\cite{IndykMMM14,AghamolaeiFZ15,CeccarelloPPU17}.
In fact, these constructions feature an additional
\emph{composability} property, meaning that the construction can be
applied independently to the subsets of an arbitrary input partition
so that the union of the coresets extracted from each subset is itself
a coreset for the entire input. This additional property enables the
development of scalable distributed (e.g., MapReduce) algorithms.
Indeed, the coreset constructions devised in this section for the DMMC
problem are also composable.

Throughout the section, we refer 
to an arbitrary input to a DMMC problem, which is specified by a
set $S$ of size $n$, a matroid $\mathcal{M}=(S, \mathcal{I}(S))$, and
an integer $k \leq \rank(\mathcal{M})$.  The formal definition of
coreset for the problem is the following.
\begin{definition}
For a positive real-valued $\beta \leq 1$, a subset $T \subseteq S$ is a 
\emph{$\beta$-coreset} for
the DMMC problem if
$\diversity_{k, \mathcal{M}} (T)
\geq \beta \diversity_{k, \mathcal{M}} (S)$.
\end{definition}
We aim at  $\beta$-coresets with $\beta$ close to 1.
Before describing how to construct such coresets, we need to
establish some technical results. Let
$\Delta_S = \max_{a, b\in S}(\dist(a, b))$ be  the \emph{diameter} of
$S$. We have:
\begin{fact} \label{fact:diameter}
For $k>1$, there exists an independent set $X \in \mathcal{I}(S)$ of size $k$
containing two points $a,b$ such that
\begin{itemize}
\item
$\dist(a,b) \geq \Delta_S/2$
\item
$\forall c \in X \setminus \{a,b\}$: $\dist(a,c) \geq \Delta_S/4$
or $\dist(b,c) \geq \Delta_S/4$.
\end{itemize}
\end{fact}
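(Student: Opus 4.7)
The plan is to first observe that the second bullet of the statement is automatic from the first via the triangle inequality: if $a,b\in X$ satisfy $\dist(a,b)\ge \Delta_S/2$, then for every $c \in X \setminus \{a,b\}$ we have $\dist(a,c)+\dist(b,c)\ge \dist(a,b)\ge \Delta_S/2$, so $\max(\dist(a,c),\dist(b,c))\ge \Delta_S/4$. The task therefore reduces to producing an independent set $X\in\mathcal{I}(S)$ of size $k$ that contains a pair of points at distance at least $\Delta_S/2$.

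Let $p,q\in S$ be a diameter-realizing pair, so $\dist(p,q)=\Delta_S$. I would split on whether $\{p,q\}\in \mathcal{I}(S)$. If so, then since $k\le \rank(\M)$, iterated application of the augmentation property (using any maximal independent set as the larger set) extends $\{p,q\}$ to an independent set $X$ of size $k$, and the pair $(a,b)=(p,q)$ clearly works with room to spare.

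The substantive case, and the step I expect to be the main obstacle, is when $\{p,q\}$ is dependent. Here I would use the singleton assumption together with iterated augmentation to build an independent set $X_p$ of size $k$ with $p\in X_p$; by the hereditary property applied to the dependent pair $\{p,q\}$, no augmenting step can introduce $q$, so $q\notin X_p$. If some $c\in X_p\setminus\{p\}$ satisfies $\dist(p,c)\ge \Delta_S/2$, we are done with $X=X_p$, $a=p$, $b=c$. Otherwise every element of $X_p\setminus\{p\}$ lies within $\Delta_S/2$ of $p$, and the triangle inequality forces $\dist(q,c)\ge \Delta_S/2$ for each such $c$.

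To exploit this, I would rebuild a size-$k$ independent set centered on $q$, using $X_p$ as a ``donor''. Starting from $B_0=\{q\}$, I would iteratively invoke the augmentation property with $A=X_p$ and $B=B_i$: as long as $|B_i|<k=|X_p|$, it yields some $x\in X_p\setminus B_i$ with $B_i\cup\{x\}$ independent, which we adopt as $B_{i+1}$. The crucial observation is that $x=p$ is impossible, since adding $p$ to a set already containing $q$ would make $\{p,q\}$ independent by heredity, contradicting the case assumption. Hence every augmenting element lies in $X_p\setminus\{p\}$, and after $k-1$ steps $X:=B_{k-1}$ is an independent set of size $k$ containing $q$ along with $k-1\ge 1$ elements of $X_p\setminus\{p\}$. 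Taking $a=q$ and letting $b$ be any such element closes the argument.
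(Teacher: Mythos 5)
Your proposal is correct. It agrees with the paper's proof on the outer structure — reduce the second bullet to the first via the triangle inequality, dispose of the case $\{p,q\}\in\mathcal{I}(S)$ immediately, and invoke the singleton assumption plus augmentation when $\{p,q\}$ is dependent — but the dependent case is handled by a genuinely different mechanism. The paper first produces a single point $r\notin\{p,q\}$ for which \emph{both} $\{p,r\}$ and $\{q,r\}$ are independent (augmenting $\{q\}$ against an independent pair containing $p$, and noting the augmenting element cannot be $p$), concludes $\max\{\dist(p,r),\dist(q,r)\}\ge\Delta_S/2$, and only then grows the chosen pair to size $k$. You instead build a full size-$k$ independent set $X_p$ around $p$ up front; if it already contains a point far from $p$ you are done, and otherwise you transplant its elements around $q$, using the same key observation (a dependent pair cannot both enter an independent set, so $p$ is never the augmenting element) to guarantee the rebuilt set consists of $q$ together with $k-1$ points that are all at distance greater than $\Delta_S/2$ from $q$. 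The paper's route is more economical — one witness point, one extension — while yours is a "bulk" version that costs an extra reconstruction pass but delivers a slightly stronger conclusion in that branch (every non-$q$ element of the final set is far from $q$, not just one). The counting in your transplant step checks out: augmentation applies as long as $|B_i|<k=|X_p|$, so $k-1$ steps suffice and $k-1\ge 1$ guarantees a valid choice of $b$.
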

\begin{proof}
We first show that there exists an independent set $\{a,b\}$
of two elements at
distance at least $\Delta_S/2$ from one another.  Let $p,q \in S$ be
such that $\Delta_S = \dist(p,q)$.  If $\{p,q\} \in \mathcal{I}(S)$
then the statement clearly holds with $a=p$ and $b=q$.  Otherwise, by
the augmentation property, there must exist a point $r \not\in
\{p,q\}$ such that both $\{p,r\}$ and $\{q,r\}$ are independent sets.
Clearly, by the triangle inequality we have that $\max
\{\dist(p,r),\dist(q,r)\} \ge \Delta_S/2$.  If $\dist(p,r) \ge
\Delta_S/2$ (resp., $\dist(q,r) \ge \Delta_S/2$) the statement holds
with $a=p$ (resp., $a=q$) and $b=r$. The set  $\{a,b\}$ can be
augmented to a set $X$ of size $k$ using the augmentation
property $k-2$ times. Then, the first property stated by the fact
immediately established, and the second property follows
since, by  the triangle inequality, 
any point in $X\setminus \{a,b\}$ must be at distance at
least $\Delta_S/4$ from either $a$ or $b$.
\end{proof}
Observe that each diversity function $\diversity$ listed in
Table~\ref{tab:diversity-notions} is a sum of $f(k)$
distances between points, where $f(k) = {k \choose 2}$,
for max-DMMC, $f(k) = k-1$ for star-DMMC and tree-DMMC,
$f(k) = k$ for cycle-DMMC, and $f(k) =
\lfloor k/2 \rfloor \lceil k/2 \rceil$ for bipartition-DMMC.
A crucial parameter for the analysis of our algorithms
is the \emph{average farness} 
\[
\rho_{S,k} = {\diversity_{k, \mathcal{M}}(S) \over  f(k)}.
\]
The following lemma provides a lower bound to the average farness as a function
of the diameter of the input set, for each of the
diversity functions considered in this paper.
\begin{lemma}\label{lem:diameter}
For $k > 1$, we have that
\[
\rho_{S,k}  \ge 
\left\{
\begin{array}{ll}
\Delta_S/(2k) & \mbox{for sum-DMMC} \\
\Delta_S/(4(k-1)) & \mbox{for star-DMMC} \\
\Delta_S/(2(k-1)) & \mbox{for tree-DMMC} \\
\Delta_S/k & \mbox{for cycle-DMMC} \\
\Delta_S/(2(k+1)) & \mbox{for bipartition-DMMC} \\
\end{array}
\right.
\]
\end{lemma}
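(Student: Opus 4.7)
The plan is to invoke Fact~\ref{fact:diameter} to obtain an independent set $X=\{a,b,c_1,\dots,c_{k-2}\}\in\mathcal{I}(S)$ of size $k$ with $d(a,b)\ge\Delta_S/2$, and to lower bound $\diversity(X)$ for each variant in Table~\ref{tab:diversity-notions}. Since $\diversitykm(S)\ge\diversity(X)$, dividing by $f(k)$ yields the claimed bound on $\rho_{S,k}$. The recurring workhorse, beyond $d(a,b)\ge\Delta_S/2$, is the triangle inequality $d(a,c)+d(b,c)\ge d(a,b)\ge\Delta_S/2$ for every $c\in X\setminus\{a,b\}$.

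For the four ``easy'' variants I would proceed as follows. For sum-DMMC, $\diversity(X)\ge d(a,b)+\sum_i[d(a,c_i)+d(b,c_i)]\ge(k-1)\Delta_S/2$, then divide by $\binom{k}{2}$. For star-DMMC, I would show that for every candidate center $c\in X$ the star weight is at least $d(a,b)\ge\Delta_S/2$: immediate when $c\in\{a,b\}$, and via $d(c,a)+d(c,b)\ge d(a,b)$ otherwise. For tree-DMMC, observe that $\MST(X)$ contains a path between $a$ and $b$, whose weight is at least $d(a,b)\ge\Delta_S/2$ by repeated triangle inequality. For cycle-DMMC, the TSP tour decomposes into two internally disjoint $a$--$b$ paths, each of weight at least $d(a,b)\ge\Delta_S/2$, so $w(\TSP(X))\ge\Delta_S$. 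In each case, dividing by $f(k)$ produces a bound at least as strong as the one stated.

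The delicate case is bipartition-DMMC. For any balanced bipartition $(Q,R)$ of $X$ with $|Q|=\lfloor k/2\rfloor$, I would split on whether $a$ and $b$ lie on the same side. In Case~1, with $\{a,b\}$ on one side, each point $v$ on the other side contributes $d(a,v)+d(b,v)\ge\Delta_S/2$ to the cross sum, giving at least $\lfloor k/2\rfloor\,\Delta_S/2$ (using the smaller side as a uniform lower bound). Case~2, where $a\in Q$ and $b\in R$, is the real obstacle: the trivial bound from the split edge $d(a,b)$ alone gives only $\Delta_S/2$, which is too weak once $k$ grows past $5$. The plan is to apply the triangle inequality $d(a,v)+d(u,v)+d(u,b)\ge\Delta_S/2$ to each of the $qr=(\lfloor k/2\rfloor-1)(\lceil k/2\rceil-1)$ cross pairs $(u,v)\in(Q\setminus\{a\})\times(R\setminus\{b\})$, sum them, and combine with the $d(a,b)$, $\sum_v d(a,v)$, and $\sum_u d(u,b)$ cross contributions via a short linear manipulation; after cancellation the result is again a cross sum of at least $\lfloor k/2\rfloor\,\Delta_S/2$.

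Taking the minimum over the two cases gives $\diversity(X)\ge\lfloor k/2\rfloor\,\Delta_S/2$ for bipartition-DMMC, and division by $f(k)=\lfloor k/2\rfloor\lceil k/2\rceil$ yields $\rho_{S,k}\ge\Delta_S/(2\lceil k/2\rceil)\ge\Delta_S/(2(k+1))$, matching the claim. The main technical hurdle is exactly Case~2: orchestrating the $qr$ triangle inequalities so that the coefficients multiplying the intermediate terms $d(a,v)$ and $d(u,b)$ do not spoil the lower bound on the cross sum.
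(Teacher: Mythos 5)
Your overall architecture coincides with the paper's: both start from the independent set $X$ of Fact~\ref{fact:diameter}, lower-bound $\diversity(X)$ for each variant, and divide by $f(k)$. For sum, star, tree and cycle your arguments are correct and in fact yield slightly better constants than claimed, because you use the full triangle inequality $\dist(a,c)+\dist(b,c)\ge \dist(a,b)\ge\Delta_S/2$ where the paper settles for the weaker $\Delta_S/4$ guarantee from the second bullet of Fact~\ref{fact:diameter}; Case~1 of bipartition-DMMC is likewise fine.

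The genuine gap is exactly where you flag it: Case~2 of bipartition-DMMC. Summing the $qr$ inequalities $\dist(a,v)+\dist(u,v)+\dist(u,b)\ge\Delta_S/2$ with unit weights gives
\[
q\sum_{v}\dist(a,v)\;+\;\sum_{u,v}\dist(u,v)\;+\;r\sum_{u}\dist(u,b)\;\ge\; qr\,\Delta_S/2,
\]
so each edge $\dist(a,v)$ is counted $q$ times and each $\dist(u,b)$ is counted $r$ times, while each appears only once in the cross sum. After the ``cancellation'' you describe, those edges carry coefficients $1-q$ and $1-r$, which are negative for $k\ge 4$ and cannot be controlled: the only a priori upper bound on such an edge is $\Delta_S$, and substituting it makes the bound collapse as $k$ grows. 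Announcing that the coefficients must be ``orchestrated'' is the problem, not its solution. Two fixes exist. The paper's fix is to use only $\lfloor k/2\rfloor-1$ \emph{disjoint} pairs $(u_i,v_i)$ --- a matching of $Q\setminus\{a\}$ into $R\setminus\{b\}$ --- so that every cross edge is used in at most one inequality; each pair then contributes at least $\Delta_S/4$ to the cross sum (either via $\dist(a,v_i)+\dist(b,u_i)$ or, by the triangle inequality, via $\dist(u_i,v_i)$), and adding $\dist(a,b)\ge\Delta_S/2$ gives $\diversity(X)\ge\lfloor k/2\rfloor\Delta_S/4$, which suffices after dividing by $f(k)=\lfloor k/2\rfloor\lceil k/2\rceil$. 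Alternatively, your all-pairs scheme can be salvaged by weighting each of the $qr$ inequalities by $1/r$ (where $q\le r$): every cross edge then receives total coefficient at most $1$, the right-hand sides sum to $q\Delta_S/2$, and adding $\dist(a,b)$ recovers $(q+1)\Delta_S/2=\lfloor k/2\rfloor\Delta_S/2$. One of these ingredients (disjointness or normalization) is missing from the proposal as written, and without it the key step fails.
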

\begin{proof}
Consider the independent set $X$ of size $k$, whose existence is proved in
Fact~\ref{fact:diameter}, which contains two points $a,b$ with
$\dist(a,b) \ge \Delta_S/2$ and such that the remaining $k-2$ points
are at distance at least $\Delta_S/4$ from $a$ or $b$, and observe that 
\[
\rho_{S,k} \ge {\diversity(X) \over  f(k)}.
\]
For the sum-DMMC problem, $\diversity(X) \ge (k-1)\Delta_S/4$ since
$\dist(a,b) \geq \Delta_S/2$ and for each $c \in X\setminus \{a,b\}$,
$\dist(a,c)+\dist(b,c) \ge \Delta_S/4$. The bound follows
since, for this diversity function, $f(k) = {k \choose 2}$.  For the
star-DMMC problem, $\diversity(X) \ge \Delta_S/4$ since for any $c \in
X$ there  exists at least one point $u \in X\setminus \{c\}$ such that
$\dist(c,u) \ge \Delta_S/4$.  The bound follows since, for this
diversity function, $f(k) = k-1$.  For the tree-DMMC problem,
$\diversity(X) \ge \Delta_S/2$, since any spanning tree connecting
the points of $X$ includes a path between $a$ and $b$ which
has length at least $\dist(a,b) \ge \Delta_S/2$.  The bound follows
since, for this diversity function, $f(k) = k-1$.  Similarly, for the cycle-DMMC
problem, $\diversity(X) \ge \Delta_S$ since any Hamiltonian cycle connecting the
points of $X$ is made of two edge-disjoint paths between $a$ and $b$
whose aggregate length is at least $2\dist(a,b) \ge \Delta_S$.
The bound follows since, for this diversity function, $f(k) = k$.
For the bipartition-DMMC problem, let $(Q, X\setminus Q)$, with $|Q| = \lfloor k/2\rfloor$, be the 
bipartition of $X$ minimizing the sum $\sum_{u\in Q, v\in X\setminus Q} \dist(u,v)$. We distinguish two 
cases. In case $a$ and $b$ belong to the same subset of the bipartition, then each point $c$ in
the other subset will contribute at least $\dist(a,c)+\dist(b,c) \geq \Delta_S/4$ to the sum, whence 
$\diversity(X) \ge \lfloor k/2\rfloor \Delta_S/4$. Otherwise, assume w.l.o.g.\ that $a\in Q$ and $b\in X\setminus Q$
and observe that these two points contribute at least $\Delta_S/2$ to the sum. Out of the remaining $k-2$ points in $X$,
we can create $\lfloor k/2\rfloor -1$ mutually disjoint pairs $\{u_i, v_i\}$, with $u_i \in Q$ and $v_i \in X\setminus Q$, for $1\leq i<\lfloor k/2\rfloor$. 
We have that either  $\dist(a, v_i) + \dist(b,u_i) \ge \Delta_S/ 4$ or, by the triangle inequality, 
$\dist(u_i,v_i) \geq \dist(a,b)-\dist(a,v_i)-\dist(b,u_i) > \Delta_S/2-\Delta_S/4 = \Delta_S/4$. Thus, pair $\{u_i, v_i\}$ contributes al least 
$\Delta_S/4$ to the sum, hence $\diversity(X) \geq (\lfloor k/2\rfloor -1) \Delta_S/4+\dist(a,b) \ge \lfloor k/2\rfloor \Delta_S/4$.
The bound follows since, for this diversity function, $f(k) = \lfloor k/2\rfloor\lceil k/2\rceil$.
\end{proof}
It is easy to argue that there are instances of the problem for which
the lower bound to $\rho_{S,k}$ is tight, up to constant factors.

Intuitively, a good coreset in
our setting is a set of points that contains, for each independent set in
$\mathcal{I}(S)$, an independent set of comparable diversity. 
The following lemma (which holds for every instantiation of the 
DMMC problem) formalizes this intuition.
\begin{lemma}\label{lem:coreset-requirements}
Let $\epsilon < 1$ be a positive value.  Consider a subset $T
\subseteq S$ such that for each $X \in \mathcal{I}(S)$ of size $k$
there is an injective proxy function $p: X \rightarrow T$ satisfying (i)
$\{p(x) : x \in X\} \in \mathcal{I}(S)$;
and (ii) $\dist(x, p(x)) \le (\epsilon/2)\rho_{S,k}$, for every $x \in X$.  Then, $T$ is a
$(1-\epsilon)$-coreset.
\end{lemma}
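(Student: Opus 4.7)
The plan is to fix an optimal independent set $X^* \in \mathcal{I}(S)$ of size $k$ with $\diversity(X^*) = \diversitykm(S)$, form its proxy image $Y = \{p(x) : x \in X^*\} \subseteq T$, and argue $\diversity(Y) \geq (1-\epsilon)\diversity(X^*)$. Injectivity of $p$ gives $|Y| = k$, and hypothesis (i) gives $Y \in \mathcal{I}(S)$, so $Y$ is a feasible DMMC solution inside $T$; hence $\diversitykm(T) \geq \diversity(Y)$. The central tool is the triangle inequality: for any $x,y \in X^*$, hypothesis (ii) yields
\[
|\dist(p(x),p(y)) - \dist(x,y)| \;\leq\; \dist(x,p(x)) + \dist(y,p(y)) \;\leq\; \epsilon\,\rho_{S,k},
\]
so every pairwise distance among proxies agrees with the corresponding distance in $X^*$ up to an additive $\epsilon\,\rho_{S,k}$.

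Each diversity function in Table~\ref{tab:diversity-notions} aggregates exactly $f(k)$ pairwise distances, either as a plain sum (sum-DMMC) or as a minimum of such sums taken over a combinatorial family of substructures (star, MST, Hamiltonian cycle, balanced bipartition). For sum-DMMC the bound is immediate: $\diversity(Y)$ is the sum of ${k \choose 2}$ proxied distances, so it differs from $\diversity(X^*)$ by at most $\binom{k}{2}\epsilon\,\rho_{S,k} = \epsilon f(k)\,\rho_{S,k} = \epsilon\,\diversitykm(S)$ by definition of $\rho_{S,k}$.

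For the min-based variants the main subtlety is to transfer in the correct direction. I would start from an \emph{optimal} substructure \emph{of} $Y$ (its best center for star-DMMC, its MST for tree-DMMC, its TSP tour for cycle-DMMC, or its optimal balanced bipartition for bipartition-DMMC), pull it back to $X^*$ via $p^{-1}$ using injectivity, and apply the edge-wise bound to each of its $f(k)$ distances. The pulled-back object is then a \emph{specific} substructure on $X^*$ whose cost differs from $\diversity(Y)$ by at most $f(k)\epsilon\,\rho_{S,k}$, and which upper-bounds $\diversity(X^*)$ by the min property (e.g.\ any star in $X^*$ is at least $\min_c \sum_{x\neq c}\dist(c,x)$, and any spanning tree on $X^*$ is at least $w(\MST(X^*))$). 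Rearranging gives $\diversity(Y) \geq \diversity(X^*) - \epsilon f(k)\,\rho_{S,k}$. Going the opposite way (pulling back from $X^*$ to $Y$) would only bound the cost of a \emph{particular} substructure on $Y$ from below, not $\diversity(Y)$ itself, so this direction check is the step where I expect readers to slip and is the main obstacle to a clean argument.

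Putting the pieces together, in every case $\diversity(Y) \geq \diversity(X^*) - \epsilon f(k)\,\rho_{S,k} = (1-\epsilon)\diversitykm(S)$, and therefore $\diversitykm(T) \geq \diversity(Y) \geq (1-\epsilon)\diversitykm(S)$, which is the required coreset inequality.
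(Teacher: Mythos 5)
Your proposal is correct and follows essentially the same route as the paper: proxy the optimal solution into $T$, use hypothesis (i) plus injectivity for feasibility, bound each of the $f(k)$ contributing distances by $\epsilon\rho_{S,k}$ via the triangle inequality, and conclude with $\diversity_{k,\mathcal{M}}(T)\ge\diversity(p(O))$. The paper compresses the aggregation step into a single ``it follows that''; your explicit treatment of the min-based measures (pulling the optimal substructure of $Y$ back to $X^*$ rather than pushing forward) correctly supplies the direction-of-transfer detail that the paper leaves implicit.
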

\begin{proof}
Let $O \subseteq S$ be an optimal solution to the DMMC instance
and consider the set of proxies $p(O)=\{p(o) : o \in O\} \subseteq T$,
which is an independent set of size $k$ by hypothesis 
and is thus a feasible solution. By the triangle inequality and the
properties of the proxy function, for each pair
$o_1, o_2 \in O$, we have that 
\begin{eqnarray*}
\dist(p(o_1), p(o_2)) & \ge & 
\dist(o_1, o_2)-\dist(o_1, p(o_1))-\dist(o_2, p(o_2)) \\
& \ge &
\dist(o_1, o_2)-\epsilon\rho_{S,k}.
\end{eqnarray*}
It follows that 
\[
\diversity(p(O)) \ge
\diversity(O)-f(k)(\epsilon\rho_{S,k}),
\]
where $f(k)$ denotes, as stated before, 
the number of distances which contribute to $\diversity$.
Since $\diversity(O)=\diversity_{k, \mathcal{M}}(S)=f(k)\rho_{S,k}$,
it follows that 
\[
\diversity(p(O)) \ge (1-\epsilon) \diversity_{k, \mathcal{M}}(S).
\]
The lemma follows, since 
$\diversity_{k, \mathcal{M}}(T) \ge \diversity(p(O))$.
\end{proof}

In the next subsections, we will develop clustering-based
constructions of small coresets meeting the requirements of
Lemma~\ref{lem:coreset-requirements} for partition and transversal
matroids. We will also point out how to extend these constructions to
the case of general matroids, at the expense of a possible blow-up in the
coreset size.  

\subsection{Coreset construction} \label{subsec:construction}

\sloppy
Fix an arbitrary positive constant $\epsilon < 1$, and consider a
$\tau$-\emph{clustering} $(\mathcal{C},Z)$ of the input set $S$, where
$\mathcal{C}=\{C_1, \dots, C_{\tau}\}$ and $Z = \{z_1, \dots,
z_{\tau}\}$, with radius
\begin{equation}\label{eq:proof:partition-radius-bound}
r(\mathcal{C},Z) \le \frac{\epsilon}{4}\rho_{S,k}.
\end{equation}
Observe that such a clustering surely exists, as long as $\tau$ is
large enough, since the trivial $n$-clustering 
where each element of $S$
is a singleton cluster has radius 0. Our coresets are obtained by
selecting a suitable subset from each cluster of $\mathcal{C}$ so that
the properties (i) and (ii) specified in 
Lemma~\ref{lem:coreset-requirements} are satisfied. In particular,
the bound on the clustering radius is functional to establish property (ii).

The effectiveness of this approach relies on the existence of
a clustering with suitably small $\tau$, so that
the resulting coreset size is significantly smaller than $n$.
Although it is not easy to determine a meaningful upper bound to
$\tau$ in the general case, in the next subsection we show that for
the important case of metric spaces of bounded doubling dimension,
$\tau$ is upper bounded by a constant w.r.t. $n$. 

Technically, throughout this section, we work under the hypothesis that
a $\tau$-clustering $(\mathcal{C},Z)$ whose radius satisfies
Equation~\ref{eq:proof:partition-radius-bound} is available, and
postpone the description of its explicit construction to 
Section~\ref{sec:implementation}, since different constructions will
be employed for the different computational settings considered in
this paper.

Below, we describe, separately for each matroid type, how a coresets
can be derived from the $\tau$-clustering $(\mathcal{C},Z)$ of $S$ of
radius $r(\mathcal{C},Z) \le (\epsilon/4)\rho_{S,k}$.

\subsubsection{Partition matroid} \label{subsubsec:partition}
Consider a partition  matroid 
$\mathcal{M}=(S, \mathcal{I}(S))$ with
categories $A_1, \dots, A_h$ and cardinality bounds $k_1,
\dots, k_h$.
We build the coreset $T$ for $S$ as follows.  From each cluster $C_i$
of $\mathcal{C}$ we select a largest independent set $T_i \subseteq
C_i$ of size at most $k$, and let $T = \bigcup_{i=1}^{\tau} T_i$.  The
effectiveness of this simple strategy is stated by the following
theorem.
\begin{theorem}\label{thm:partition-proxy}
The set $T$ computed by the above procedure
from a $\tau$-clustering $(\mathcal{C},Z)$ of $S$ of radius
$r(\mathcal{C},Z) \le (\epsilon/4)\rho_{S,k}$ is a
$(1-\epsilon)$-coreset of size $\BO{k\tau}$ for the DMMC problem.
\end{theorem}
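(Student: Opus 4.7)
The size bound is immediate: since $|T_i| \le k$ for each $i$, we have $|T| = \sum_{i=1}^{\tau} |T_i| \le k\tau$. To establish the $(1-\epsilon)$-coreset property I invoke Lemma~\ref{lem:coreset-requirements}. Fix any $X \in \mathcal{I}(S)$ of size $k$ and plan to construct a proxy $p : X \to T$ by requiring $p(x) \in T_i$ whenever $x \in C_i$. Property (ii) of the lemma then follows immediately from the triangle inequality and the radius hypothesis~(\ref{eq:proof:partition-radius-bound}):
\[
d(x, p(x)) \le d(x, z_i) + d(z_i, p(x)) \le 2\, r(\mathcal{C}, Z) \le (\epsilon/2)\rho_{S,k}.
\]

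For property (i), the natural first attempt is to make $p$ \emph{category-preserving}: for each $x \in X \cap C_i \cap A_j$ pick $p(x) \in T_i \cap A_j$. Then $|p(X) \cap A_j| = |X \cap A_j| \le k_j$ for every category, so $p(X)$ is independent in the partition matroid. Feasibility of this assignment reduces to the per-cluster, per-category inequality $|T_i \cap A_j| \ge |X \cap C_i \cap A_j|$. The aggregate count $|T_i| \ge |X \cap C_i|$ is immediate, since $X \cap C_i$ is an independent subset of $C_i$ of size at most $k$ and $T_i$ is the largest such. For the refined per-category inequality, I would argue via matroid exchange: if the inequality failed, then since $|X \cap C_i \cap A_j| \le k_j$, some $y \in X \cap C_i \cap A_j \setminus T_i$ could be added to $T_i$ without exceeding the cap $k_j$, contradicting maximality when $|T_i| < k$, or enabling a swap with some element of $T_i$ lying in a category where $T_i$ exceeds the corresponding count in $X \cap C_i$ when $|T_i| = k$.

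The main obstacle I foresee is the case $|T_i| = k$: since a single $T_i$ must support the construction for every admissible $X$, and different $X$'s may favour different category profiles, the exchange argument must be executed globally rather than on a fixed $T_i$. My preferred resolution is to dispense with strict category preservation and instead show directly that there exists $B \subseteq T$ with $|B \cap T_i| = |X \cap C_i|$ for each $i$ and $B \in \mathcal{I}(S)$; any cluster-respecting bijection between $X$ and $B$ then defines a valid proxy. This is a bipartite transportation feasibility problem with supplies $|T_i \cap A_j|$, cluster demands $|X \cap C_i|$, and category caps $k_j$. I would verify feasibility via max-flow/min-cut, checking that for every $I' \subseteq [\tau]$ and $J' \subseteq [h]$ the cut inequality
\[
\sum_{i \notin I',\, j \notin J'} |T_i \cap A_j| \;+\; \sum_{j \in J'} k_j \;\ge\; \sum_{i \notin I'} |X \cap C_i|
\]
holds, combining $|T_i| \ge |X \cap C_i|$ with the independence of $X$ (which yields $|X \cap A_j| \le k_j$) to absorb both sides of the cut.
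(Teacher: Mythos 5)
Your size bound, distance bound, reduction to Lemma~\ref{lem:coreset-requirements}, and the category-preserving assignment for clusters whose largest independent set has size strictly less than $k$ all match the paper's proof, and your exchange argument for that case is correct. You also correctly identify that the per-category inequality $|T_i \cap A_j| \ge |X \cap C_i \cap A_j|$ can genuinely fail when $|T_i| = k$ (e.g.\ $T_i$ entirely in one category while $X \cap C_i$ lies in another), so strict category preservation must be abandoned there. The gap is in your proposed resolution. The cut inequality you write down is true, but it does \emph{not} follow from the two facts you cite. The natural absorption --- charging the $J'$-demand $\sum_{j\in J'}|X\cap A_j\cap\bigcup_{i\notin I'}C_i|$ against $\sum_{j\in J'}k_j$ and the $\bar{J'}$-demand against $\sum_{i\notin I',\,j\notin J'}|T_i\cap A_j|$ --- breaks exactly in the problematic case: if $T_i$ has size $k$ and is concentrated in categories of $J'$ while $X\cap C_i$ lives in categories outside $J'$, the second sum contributes nothing for that cluster. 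The inequality is rescued only because the \emph{slack} $\sum_{j\in J'}\bigl(k_j-|X\cap A_j|\bigr)$ covers the deficit, and verifying that requires using that $|T_i|=k$ exactly for such clusters (so $\sum_{j\notin J'}|T_i\cap A_j| = k - \sum_{j\in J'}|T_i\cap A_j|$), that $|T_i\cap A_j|\le k_j$, that $\sum_i |X\cap C_i| = k$, \emph{and} the per-category domination for the remaining clusters --- i.e.\ essentially redoing the small/large case analysis inside the min-cut verification. As written, the flow step is an unverified assertion at precisely the point where the difficulty lives.

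The paper resolves the $|T_i|=k$ case with a one-line argument you should have available: the matroid \emph{augmentation property}. Having built the partial proxy set $P$ (category-preserving, hence independent) from the clusters with $|T_i|<k$, one observes that $|P|<k=|T_i|$ for each remaining cluster, so the augmentation axiom applied $|X\cap C_i|$ times yields $|X\cap C_i|$ elements of $T_i$ whose addition keeps $P$ independent; these serve as proxies for $X\cap C_i$ with no category bookkeeping at all. This sidesteps the transportation problem entirely. Your approach could be completed, but only by importing the same structural facts the augmentation argument uses for free, so I would regard the proof as having a genuine gap at its key step rather than being a correct alternative route.
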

\begin{proof}
First observe that the  bound on the size of $T$ is immediate
by construction. As for the approximation, we now
show that for any
independent set $X \in \mathcal{I}(S)$ with $|X| = k$, there is
an injective function $p: X \rightarrow T$ such that $\{p(x) : x \in
X\} \in \mathcal{I}(S)$ and $\dist(x, p(x)) \le (\epsilon/2)
\rho_{S,k}$. The result will then follow from Lemma~\ref{lem:coreset-requirements}.
Consider the set of clusters as partitioned in two families:
$\mathcal{C}^\ell$ includes those clusters that contain an independent
set of size $k$, whereas $\mathcal{C}^s =
\mathcal{C}-\mathcal{C}^\ell$ includes the remaining clusters that
contain only independent sets of size strictly less than $k$.  For each cluster
$C_i \in \mathcal{C}^s$, consider the independent set $T_i
\subseteq C_i$ included in $T$ by the algorithm.  It is easy to see
that for each category $A_j$, we have $|X \cap C_i \cap A_j|
\le |T_i \cap A_j|$, since
$|T_i| < k$ and that $T_i$ is a largest independent set in $C_i$. 
Therefore, each point $x \in X \cap C_i$ can be
associated with a distinct point $p(x) \in T_i$ belonging to the same
category. Let 
$P = \{p(x) : x \in X \cap (\cup_{C\in \mathcal{C}^s} C)\}$
and note that $P$ is an independent set,
since $X \cap (\cup_{C \in \mathcal{C}^s} C)$ is an
independent set (by the hereditary property of matroids) 
and the number of elements per category is the same
in $X$ and in $P$.
If $P$ has size $k$, then  $X$ has no points in clusters of 
$\mathcal{C}^\ell$ and the lemma is proved. If instead $P$ has
size strictly less than $k$, we consider the clusters in
$\mathcal{C}^\ell$ containing the remaining $k - |P|$ points of $X$.
For each such $C_i$, we expand $P$ to a larger
independent set by adding $n_i=|X \cap C_i|$ elements from $T_i$ using
the augmentation property $n_i$ times, exploiting the fact that $T_i$ is an
independent set of size $k$. These $n_i$ elements of $T_i$ can act as
distinct proxies of the elements in $X \cap C_i$ under function
$p$. After these additions, we obtain an independent set $P$ of $k$
distinct proxies for the elements of $X$.
Since, for each $x \in X$, $p(x) \in P$ is taken
from the same cluster of $\mathcal{C}$, we have that, by the triangle inequality and by
Equation~(\ref{eq:proof:partition-radius-bound}), $\dist(x, p(x)) \le
(\epsilon/2) \rho_{S,k}$.
\end{proof}

\subsubsection{Transversal matroid}
\label{subsubsec:transversal}

The construction of the coreset $T$ is more involved in the case of
transversal matroids. Consider a transversal matroid $\mathcal{M}=(S,
\mathcal{I}(S))$ defined over a family $\mathcal{A}=\{A_1, \ldots,
A_h\}$ of $h$ (not necessarily disjoint) categories.  The fact that
the categories can now overlap complicates the coreset construction,
resulting in slightly larger coresets.  For each cluster $C_i$ of
$\mathcal{C}$ we begin by selecting, as before, a largest independent
set $U_i$ of size at most $k$.  If $|U_i|= k$, then we set $T_i=U_i$.
Instead, if $|U_i| < k$, let $\mathcal{A}' \subseteq \mathcal{A}$ be a
subfamily of categories of the points of $U_i$. We construct 
$T_i$ by augmenting $U_i$ in such a way that for each category $A \in \mathcal{A}'$ there are $\min\{k, |A \cap C_i|\} \leq k$ points of category $A$
in $T_i$. (Observe that a point contributes to the count for all of its
categories in $\mathcal{A}'$) Finally, we let $T = \bigcup_{i=1}^{\tau} T_i$.
\begin{theorem}\label{thm:transversal-proxy}
The set $T$ computed by the above procedure
from a $\tau$-clustering $(\mathcal{C},Z)$ of $S$ of radius
$r(\mathcal{C},Z) \le (\epsilon/4)\rho_{S,k}$ is a
$(1-\epsilon)$-coreset of size $\BO{k^2\tau}$ for the DMMC problem.
\end{theorem}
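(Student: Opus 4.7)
The plan is to invoke Lemma~\ref{lem:coreset-requirements}, following the blueprint of Theorem~\ref{thm:partition-proxy} but adapting it to the richer structure of transversal matroids. The size bound $|T|=\BO{k^2\tau}$ is immediate from the construction: for each cluster $C_i$ we have $|U_i|\leq k$, and since every element of $S$ belongs to a constant number of categories, the subfamily $\mathcal{A}'$ contains $\BO{k}$ categories; because $T_i$ carries at most $k$ points per category in $\mathcal{A}'$, we get $|T_i|=\BO{k^2}$, and summing over the $\tau$ clusters yields the claimed bound.

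For the $(1-\epsilon)$-coreset property, fix an independent set $X\in\mathcal{I}(S)$ with $|X|=k$ together with a matching $\phi:X\to\mathcal{A}$ witnessing its independence (i.e., $\phi$ is injective and $x\in\phi(x)$ for every $x\in X$). I plan to build an injective proxy $p:X\to T$ such that (a) $p(x)\in\phi(x)$ for every $x$, so that the induced injective map $p(x)\mapsto\phi(x)$ certifies that $p(X)\in\mathcal{I}(S)$; and (b) $p(x)$ lies in the same cluster as $x$, which by the triangle inequality and the radius bound~(\ref{eq:proof:partition-radius-bound}) gives $\dist(x,p(x))\leq 2r(\mathcal{C},Z)\leq(\epsilon/2)\rho_{S,k}$. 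Since distinct clusters (and hence the $T_i$'s) are pairwise disjoint, it suffices to exhibit, for each $C_i$ with $X\cap C_i\neq\emptyset$, an injection from $X\cap C_i$ into $T_i$ that sends each $y$ to some element of $T_i\cap\phi(y)$; by Hall's theorem this reduces to checking, for every $Y\subseteq X\cap C_i$, that $|T_i\cap\bigcup_{y\in Y}\phi(y)|\geq|Y|$.

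The verification of Hall's condition splits into the two cases of the construction, and this case analysis is the crux of the proof. For $C_i\in\mathcal{C}^s$, I first show that every $y\in X\cap C_i$ satisfies $\phi(y)\in\mathcal{A}'$: if $\phi(y)\cap U_i=\emptyset$, then extending a matching $\psi$ witnessing the independence of $U_i$ by $y\mapsto\phi(y)$ produces a matching of $U_i\cup\{y\}$ (because $\phi(y)$ is distinct from every $\psi(u)\ni u\in U_i$), contradicting the maximality of $U_i$ in $C_i$. Then for $Y\subseteq X\cap C_i$, if some $y_0\in Y$ has $|\phi(y_0)\cap C_i|>k$ we get $|T_i\cap\phi(y_0)|=k\geq|Y|$; otherwise $\phi(y)\cap C_i\subseteq T_i$ for every $y\in Y$, and in particular $y\in T_i$, so the $y$'s themselves form $|Y|$ distinct witnesses inside $T_i\cap\bigcup_{y\in Y}\phi(y)$. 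For $C_i\in\mathcal{C}^\ell$, where $T_i=U_i$ has size $k$, let $\mathcal{A}_Y=\{\phi(y):y\in Y\}$ and $U''_Y=U_i\setminus\bigcup\mathcal{A}_Y$; a matching $\psi$ witnessing the independence of $U_i$ must send each $u\in U''_Y$ to a category outside $\mathcal{A}_Y$, because $\psi(u)\ni u$ while $u$ lies in no member of $\mathcal{A}_Y$. Consequently $\psi|_{U''_Y}\cup\phi|_Y$ is an injective matching showing that $U''_Y\cup Y$ is independent, so its size is at most $|U_i|=k$, yielding $|U''_Y|+|Y|\leq k$ and hence the required $|U_i\cap\bigcup\mathcal{A}_Y|\geq|Y|$. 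Once Hall's condition is secured in both cases, the per-cluster matchings combine into the proxy $p$ and Lemma~\ref{lem:coreset-requirements} closes the argument.
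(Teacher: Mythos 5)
Your treatment of the clusters in $\mathcal{C}^s$ is correct, and recasting the paper's greedy proxy assignment as a Hall-condition check is a clean alternative for that case. The gap is in the $\mathcal{C}^\ell$ case. There you argue that $U''_Y\cup Y$ is an independent subset of $C_i$ and conclude that ``its size is at most $|U_i|=k$,'' but this does not follow: $U_i$ is only a largest independent set \emph{of size at most $k$} in $C_i$, not a maximum independent set of $C_i$. Precisely when $C_i\in\mathcal{C}^\ell$, the cluster may contain independent sets of size strictly larger than $k$, and then $|U''_Y|+|Y|$ can exceed $k$. Concretely, take a transversal matroid with $2k$ pairwise disjoint categories $A_1,\dots,A_{2k}$ (so $\rank(\mathcal{M})=2k\geq k$), and a cluster $C_i$ containing $u_1,\dots,u_k$ with $u_j\in A_j$ and the $k$ points of $X$ with $x_j\in A_{k+j}$. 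The algorithm may legitimately pick $U_i=T_i=\{u_1,\dots,u_k\}$, in which case $T_i\cap\phi(x_j)=\emptyset$ for every $j$: no category-preserving injection from $X\cap C_i$ into $T_i$ exists at all, so Hall's condition is not merely unproved but genuinely false. The theorem survives because a valid proxy function need not preserve categories.

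This is exactly why the paper treats the two families asymmetrically: category-matched proxies are built only for the $\mathcal{C}^s$ clusters, yielding an independent set $P$ with $|P|<k$ whenever points of $X$ remain in $\mathcal{C}^\ell$ clusters; the remaining proxies are then obtained by repeatedly applying the augmentation property of the matroid to $P$ against the size-$k$ independent sets $T_i$ of the $\mathcal{C}^\ell$ clusters (as in the proof of Theorem~\ref{thm:partition-proxy}). The elements so extracted from $T_i$ serve as proxies for the points of $X\cap C_i$ without belonging to their categories, and independence of the full proxy set is guaranteed directly by augmentation rather than by exhibiting a matching. To repair your proof, replace the Hall argument on $\mathcal{C}^\ell$ clusters with this augmentation step (performed after the $\mathcal{C}^s$ proxies are fixed, so that the set being augmented always has size less than $k=|T_i|$); the rest of your argument, including the size bound and the distance bound feeding into Lemma~\ref{lem:coreset-requirements}, is sound.
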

\begin{proof}
The proof follows the lines of the one of
Theorem~\ref{thm:partition-proxy}. The bound on the size of $T$
follows from the assumption that each point belongs to a constant
number of categories, while for the approximation guarantee of the
coreset it is sufficient to show that for any independent set $X \in
\mathcal{I}(S)$ with $|X| = k$, there is an injective function $p: X
\rightarrow T$ such that $\{p(x) : x \in X\} \in \mathcal{I}(S)$ and
$\dist(x, p(x)) \le (\epsilon/2) \rho_{S,k}$.  Again, we split
$\mathcal{C}$ into $\mathcal{C}^\ell$ (clusters containing an
independent set of size $k$) and $\mathcal{C}^s$ (remaining clusters).
We first consider clusters in $\mathcal{C}^s$.  Fix a cluster $C_i \in
\mathcal{C}^s$ and let $\mathcal{A}' \subseteq \mathcal{A}$ be the
subfamily of categories of the points of $U_i$, and recall that, by
construction, for each $A \in \mathcal{A}'$ there are $\min\{k, |A
\cap C_i|\}$ points of category $A$ in $T_i$.  Let $C_i \cap X =
\{x_1, x_2, \dots, x_m\}$, and let $A_1^X, A_2^X, \dots, A^X_m \in
\mathcal{A}$ be distinct categories that can be matched to $x_1, x_2,
\dots, x_m$. Note that the maximality of $U_i$ implies that $|U_i|
\geq m$.  Without loss of generality, assume that $x_1, \dots, x_j \in
T_i$ while $x_{j+1}, \dots, x_{m} \notin T_i$, for some $0 \leq j \leq
m$, that is, assume that exactly those $j$ points of $X$ are included
in $T_i$.  We initially set the proxies $p(x_1)=x_1, \ldots,
p(x_j)=x_j$.  Then, consider category $A^X_{j+1}$ and observe that
$A^X_{j+1} \in \mathcal{A}'$, otherwise $x_{j+1}$ (matched to
$A_{j+1}^X$) could be added to $U_i$ contradicting its maximality
within $C_i$. Since $x_{j+1}$ was not included in $T_i$, there must be
at least $k \ge m$ elements of $A_{j+1}^X$ in $T_i$.  We can thus
select one such element (distinct from $p(x_1), p(x_2), \dots,
p(x_j)$) as proxy $p(x_{j+1})$, matched to $A_{j+1}^X$.  By repeating
this step for $x_{j+2}, \ldots, x_m$, we obtain an independent set of
proxies for the elements of $X \cap C_i$, each matched to the same
category as its corresponding element. Then, by iterating this
construction over all clusters of $\mathcal{C}^s$ we get an
independent set $P$ of proxies for the elements of $X$ belonging to
such clusters, each matched to the same category as its corresponding
element.  If $P$ has size strictly less than $k$, the remaining
proxies for the elements of $X$ residing in clusters of
$\mathcal{C}^\ell$ can be chosen by reasoning as in the proof of
Theorem~\ref{thm:partition-proxy}, using the augmentation property.
Also, by the bound on the radius of $(\mathcal{C},Z)$, we have
$\dist(x, p(x)) \le (\epsilon/2)\rho_{S,k}$, for each $x \in X$.
\end{proof}

We remark that the $\BO{k^2 \tau}$ bound on the coreset size is a
rather conservative worst-case estimate. In fact, as reported in the
experimental section, it is conceivable that, in practice, much
smaller sizes can be expected. Another important observation is that
the constructions for both the partition and the transversal matroid
yield coresets whose size is independent of $n=|S|$.

\subsubsection{General Matroids}
\label{subsubsec:general}
Consider now a constraint specified as a general matroid
$\mathcal{M}(S, \mathcal{I}(S))$. We can still build a coreset $T$ in
this general scenario from the $\tau$-clustering
$(\mathcal{C},Z)$ of $S$. Specifically, for each cluster $C_i$ of
$\mathcal{C}$ we compute a largest independent set $U_i$ of size at
most $k$.  If $|U_i|=k$, then we set $T_i=U_i$, otherwise we set
$T_i=C_i$. Finally, we let $T = \bigcup_{i=1}^{\tau} T_i$.  Note that,
unlike the case of partition and transversal matroids, this coreset
may potentially grow very large when clusters do not contain large
enough independent sets. However, for small enough values of $\tau$
and $k$ (hence, large cluster sizes) we expect that each
cluster may reasonably contain an independent set of size $k$, hence
an actual coreset size of $\BO{k \tau}$ is conceivable.
We have:

\begin{theorem}\label{thm:general-proxy}
The set $T$ set computed by the above procedure from a
$\tau$-clustering $(\mathcal{C},Z)$ of $S$ of radius $r(\mathcal{C},Z)
\le (\epsilon/4)\rho_{S,k}$ is a $(1-\epsilon)$-coreset for the DMMC problem.
\end{theorem}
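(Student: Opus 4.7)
The plan is to follow the blueprint of the proofs of Theorems~\ref{thm:partition-proxy} and~\ref{thm:transversal-proxy} and to invoke Lemma~\ref{lem:coreset-requirements}. Specifically, for any independent set $X \in \mathcal{I}(S)$ with $|X|=k$, I must exhibit an injective proxy function $p: X \rightarrow T$ such that $\{p(x) : x \in X\}$ is an independent set and $\dist(x, p(x)) \le (\epsilon/2)\rho_{S,k}$ for every $x \in X$. As in the previous proofs, I split the clusters of $\mathcal{C}$ into the family $\mathcal{C}^\ell$ of clusters whose largest independent subset has size $k$ (so $|T_i|=k$), and the remaining family $\mathcal{C}^s = \mathcal{C}\setminus\mathcal{C}^\ell$, for which the construction sets $T_i = C_i$.

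For each point $x \in X \cap C_i$ with $C_i \in \mathcal{C}^s$, I simply set $p(x) = x$: since $T_i = C_i$ by construction, such a proxy already lies in $T$. Let $X^s$ be the union of these points and put $P := X^s$; by the hereditary property, $P$ is an independent set. For each cluster $C_i \in \mathcal{C}^\ell$ intersecting $X$, I extend $P$ by $n_i := |X \cap C_i|$ new elements drawn from $T_i$, assigning them as proxies to the points of $X \cap C_i$ in an arbitrary one-to-one fashion. This is possible because $T_i$ is independent of size $k$, so whenever the current $P$ satisfies $|P| < k = |T_i|$ the augmentation property yields an element of $T_i \setminus P$ whose addition keeps $P$ independent. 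After processing all clusters in $\mathcal{C}^\ell$, $P$ has size exactly $k$, is independent, and $p$ is injective by construction.

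It remains to bound the proxy distances. For a proxy of the form $p(x)=x$ the bound is trivial; for a proxy chosen via augmentation inside $T_i$, both $x$ and $p(x)$ belong to the cluster $C_i$, so the triangle inequality combined with Equation~(\ref{eq:proof:partition-radius-bound}) yields $\dist(x,p(x)) \le 2 r(\mathcal{C},Z) \le (\epsilon/2)\rho_{S,k}$. Applying Lemma~\ref{lem:coreset-requirements} completes the proof. The only delicate step I anticipate is the iterated use of augmentation across several clusters of $\mathcal{C}^\ell$: this remains sound because at every invocation we have $|P| < k = |T_i|$, so the matroid axiom applies to the pair $(T_i, P)$ and always supplies a fresh element of $T_i$, allowing exactly $n_i$ additions per cluster without clash with previously-chosen proxies.
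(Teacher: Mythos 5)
Your proposal is correct and follows essentially the same route as the paper's proof: the identity map on clusters with $T_i=C_i$, repeated augmentation against the size-$k$ independent sets $T_i$ in the remaining clusters, the $2r(\mathcal{C},Z)\le(\epsilon/2)\rho_{S,k}$ distance bound, and the final appeal to Lemma~\ref{lem:coreset-requirements}. Your explicit remark that augmentation always returns an element of $T_i\setminus P$ (guaranteeing injectivity) is a point the paper leaves implicit, but the argument is the same.
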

\begin{proof}
As in the proofs of Theorems~\ref{thm:partition-proxy}
and~\ref{thm:transversal-proxy}, it is sufficient to determine a
suitable proxy function $p$ for every independent set $X \in
\mathcal{I}(S)$ of size $k$.  We set function $p$ to be the identity
function for those points belonging to clusters $C_i$ such that
$T_i=C_i$.  The proxies for the elements of $X$ residing in the other
clusters can then be chosen through repeated applications of the
augmentation property. Specifically, consider a cluster $C_i$ such
that $T_i \subset C_i$ (hence, $T_i$ is and independent set of size
$k$ by construction) and let $m_i = |X \cap C_i|$. Since $T_i$ is an
independent set of size $k$, we can apply the augmentation property
$m_i$ times to select the $m_i$ proxies $X \cap C_i$ from $T_i$ so
that the union of all proxies is an independent set. 
Also, by the bound on the radius of
$(\mathcal{C},Z)$, we have $\dist(x, p(x)) \le
(\epsilon/2)\rho_{S,k}$, for each $x \in X$, hence the thesis
follows from Lemma~\ref{lem:coreset-requirements}.
\end{proof}

\subsection{Relating cluster granularity to doubling dimension}

In this subsection, we show that for datasets of bounded doubling
dimension, very coarse clusterings of small radius exist, and 
discuss how this feature can be exploited to obtain coresets
of small size for the DMMC problem.

\begin{theorem}\label{thm:coreset-doubling-dimension}
Let $S$ be an $n$-point dataset of doubling dimension $D$.
For any integer $\tau$, with $1 \leq \tau \leq n$,
the minimum radius of any $\tau$-clustering of $S$, denoted with $r_\tau^*(S)$, is
\[
%\leq \frac{2\Delta_S}{\tau^{1/D}}.
r_\tau^*(S) \leq \frac{2\Delta_S}{\tau^{1/D}}.
\]
\end{theorem}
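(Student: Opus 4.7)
The plan is to build a $\tau$-clustering of the claimed radius by iteratively applying the doubling property to a covering ball of $S$. Since $S$ is finite and has diameter $\Delta_S$, for any fixed $x \in S$ we have $S \subseteq B(x, \Delta_S)$, where $B(y,r)$ denotes the ball of radius $r$ centered at $y$. This initial ball will serve as the base of a recursive refinement.

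Next I would apply the definition of doubling dimension $i$ times in succession: one application covers a ball of radius $r$ by at most $2^D$ balls of radius $r/2$, so $i$ applications cover $B(x,\Delta_S)$ by at most $2^{Di}$ balls of radius $\Delta_S/2^{i}$. I would then pick
\[
i \;=\; \left\lfloor \frac{\log_2 \tau}{D} \right\rfloor,
\]
which ensures $2^{Di} \leq \tau$ and simultaneously $2^{i+1} > \tau^{1/D}$, i.e., $2^i > \tau^{1/D}/2$. This last inequality is the quantitative engine behind the claimed bound.

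From this cover I would produce a clustering by assigning every $s \in S$ to (any one of) the cover balls that contains it and then selecting an arbitrary $s$-point inside each nonempty cover ball as the cluster center. Each resulting cluster has radius at most $2 \cdot (\Delta_S/2^i) = \Delta_S/2^{i-1}$ by the triangle inequality applied to two points sharing a covering ball; combining with $2^i > \tau^{1/D}/2$ yields the sought bound
\[
r(\mathcal{C},Z) \;\le\; \frac{\Delta_S}{2^{i-1}} \;<\; \frac{2\Delta_S}{\tau^{1/D}}.
\]
If the number of nonempty cover balls is strictly less than $\tau$, I would pad by splitting an arbitrary cluster into singletons until exactly $\tau$ clusters are obtained; this operation can only decrease the radius, so the bound is preserved.

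The main technical delicacy, and the only place where some care is required, is that the balls produced by successive applications of the doubling property need not be centered at points of $S$, whereas the definition of $\tau$-clustering demands $z_i \in C_i \subseteq S$. I would handle this by re-centering each covering ball at an arbitrary $S$-point inside it and absorbing the at-most factor-of-two increase in radius into the constant $2$ appearing in the theorem (together with the slack coming from $2^{i+1} > \tau^{1/D}$). Everything else is bookkeeping, and the geometric iteration is the only nontrivial ingredient.
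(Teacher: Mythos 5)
Your overall strategy --- iterate the doubling property on a ball of radius $\Delta_S$ covering $S$, stop at $i=\lfloor \log_2\tau / D\rfloor$ so that the cover has at most $\tau$ balls, and turn the cover into a clustering --- is exactly the paper's argument. The gap is in the final arithmetic: the re-centering step costs you a factor of $2$ that the theorem's constant cannot absorb. After $i$ applications you have balls of radius $\Delta_S/2^i$ with $2^i \le \tau^{1/D} < 2^{i+1}$; re-centering each ball at a point of $S$ inside it gives clusters of radius at most $2\Delta_S/2^i = \Delta_S/2^{i-1}$. Your claimed inequality $\Delta_S/2^{i-1} < 2\Delta_S/\tau^{1/D}$ would require $2^i > \tau^{1/D}$, but the choice of $i$ forces $2^i \le \tau^{1/D}$, so in fact $\Delta_S/2^{i-1} \ge 2\Delta_S/\tau^{1/D}$, and all your argument yields is $r_\tau^*(S) \le 4\Delta_S/\tau^{1/D}$ (approached when $\tau$ is just below $2^{(i+1)D}$). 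The slack $2^{i+1} > \tau^{1/D}$ is entirely consumed by rounding $\tau^{1/D}$ down to a power of two; there is nothing left over to pay for re-centering.

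The fix is that under the paper's definition of doubling dimension the re-centering is unnecessary: a ``ball'' is defined only for centers $x \in S$ (it is the set of points of $S$ within distance $r$ of $x$), so the ``suitably centered balls'' produced by each application of the doubling property are already centered at elements of $S$. The centers of the final cover can therefore serve directly as cluster centers, each cluster has radius at most $\Delta_S/2^i$, and $\Delta_S/2^i \le 2\Delta_S/\tau^{1/D}$ follows from $2^{i+1} > \tau^{1/D}$. This is how the paper obtains the stated constant. Your instinct that re-centering may be needed is the right one for the general metric-space definition of doubling dimension --- and the paper does perform exactly this re-centering, at the cost of a factor $2$, in the MapReduce analysis where a cover of a subset $S_i$ may use centers outside $S_i$ --- but here it is both unneeded and fatal to the claimed bound.
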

\begin{proof}
Observe that the whole set $S$ is contained in the ball of radius 
$\Delta_S$ centered at any element of $S$.  By applying
the definition of doubling dimension $i$ times, starting from such a ball, 
we can cover $S$ with $2^{iD}$ balls of radius at most
$\Delta_S/2^i$. Let $j$ be such $2^{jD} \leq \tau < 2^{(j+1)D}$. 
The theorem follows since
the minimum radius of any $\tau$-clustering of $S$ is upper bounded
by the minimum radius of any $2^{jD}$-clustering of $S$, which is in
turn upper bounded by $\Delta_S/2^j \leq 2\Delta_S/\tau^{1/D}$.
\end{proof}
In order to appreciate the relevance of the above theorem, consider an
algorithm $\mathcal{A}$ that, given a target number of clusters $\tau$,
returns a $\tau$-clustering for $S$ whose radius $r_\tau^{\mathcal{A}}(S)$
is a factor at most $\sigma >1$ larger than the
minimum radius attainable by any $\tau$-clustering for $S$, that is $r_\tau^{\mathcal{A}}(S) \le \sigma\cdot r_\tau^*(S)$. The theorem
implies that, by setting $\tau = (32\sigma k/\epsilon)^D$, Algorithm
${\mathcal{A}}$ returns a $\tau$-clustering for $S$ with
radius at most
\[
r_\tau^{\mathcal{A}}(S) \le 
\sigma r_\tau^*(S) \le
\sigma \frac{2\Delta_S}{\tau^{1/D}} 
\le
\frac{\epsilon \Delta_S}{16k} 
\le
\epsilon \frac{\rho_{S,k}}{4},
\]
for any of the DMMC instantiations in
Table~\ref{tab:diversity-notions}, where the last inequality follows
from Lemma~\ref{lem:diameter}. Observe that such a small-radius
clustering can be used as the base for the coreset constructions illustrated
in the previous subsection, and its limited granularity ensures that
the coresets have size independent of $n$ for the partition and
transversal matroids. This feature will be crucial for obtaining
efficient sequential, distributed, and streaming algorithms for all
instantiations of the DMMC problem considered in this paper.
Note that this approach requires the knowledge of the
doubling dimension $D$ in order to set $\tau$ properly. 
However, in the next section, we will devise 
implementations of the constructions that return coresets of
comparable quality and size without knowledge of
$D$. This is a very desirable feature for practical purposes
since the doubling dimension of a dataset is hard to estimate.

% !TeX root = ./CeccarelloPPU19.tex
\section{DMMC algorithms}
\label{sec:implementation}
In this section, we will devise efficient sequential, MapReduce and
Streaming algorithms for the instantiations of the DMMC problem
defined in Table~\ref{tab:diversity-notions}.  At the core of these
algorithms are efficient implementations of the coreset constructions
presented in the previous section.  Consider any of the instantiations
of the DMMC problem and an arbitrary input specified by a set $S$ of
size $n$, a matroid $\mathcal{M}=(S, \mathcal{I}(S))$, and an integer
$k \leq \rank(\mathcal{M})$.  Our general approach is to extract a
$(1-\epsilon)$-coreset $T$ from $S$ and then run the best available
sequential algorithm on $T$. 
In Subsections~\ref{sec:seq-implementation},
\ref{sec:mr-implementation}, and \ref{sec:stream-implementation}, we
present the implementations of the $(1-\epsilon)$-coreset construction in the
sequential, MapReduce and Streaming setting, respectively.  Finally,
in Subsection~\ref{sec:final-algorithm} we will show how to employ
these constructions to yield the final algorithms in the various
setting and analyze their performance.

As in previous works, we assume that
constant-time oracles are available to compute the distance between
two elements of $S$ and to check whether a subset of $S$ is an
independent set \cite{AbbassiMT13}.

\subsection{Sequential coreset construction}
\label{sec:seq-implementation}
Our sequential implementation of the $(1-\epsilon)$-coreset
construction presented in the previous section, dubbed {\sc SeqCoreset}
(see Algorithm~\ref{alg-sequential} for the pseudocode) leverages the
well-known 2-approximate clustering algorithm \algo{gmm} mentioned
before \cite{Gonzalez85}. For a given input $S$, \algo{gmm} determines
the set $Z$ of cluster centers in $|Z|$ iterations, by initializing
$Z$ with an arbitrary element of $z_1\in S$, and then iteratively adding to
$Z$ the element in $S$ of maximum distance from the current $Z$. The
algorithm can be instrumented to maintain the set of clusters
$C=\{C_z: z\in Z\}$ centered at $Z$, with each element of $S$
assigned to the cluster of its closest center, and the radius of such
a clustering.  Each iteration of \algo{gmm} can be easily implemented
in time linear in $n$ (see Procedure \algo{gmm-iteration} in
Algorithm~\ref{alg-sequential}) and, as proved in the original paper
\cite{Gonzalez85}, the clustering resulting at the end of the $i$-th
iteration has radius which is no more than twice as large as the
minimum radius of any $i$-clustering.

Since the second center $z_2$ selected by \algo{gmm} is the farthest point
from the first (arbitrarily selected) center $z_1$, it is easy to see that
the distance $\delta=d(z_1,z_2)$ between these first two centers is such that
$\Delta_S/2 \leq \delta \leq \Delta_S$.  In order to build the
$(1-\epsilon)$-coreset $T$, we run \algo{gmm} for a number $\tau$ of
iterations sufficient to reduce the radius of the clustering to a
value at most $\epsilon \delta/(16k)$. Once the clustering is
computed, for each cluster $C_z$ a largest independent set $U_z
\subseteq C_z$ of size at most $k$ is determined.  For the partition
matroid, the coreset $T$ is obtained as the union of the $U_z$'s.  For
the transversal matroid, $T$ is obtained by first augmenting each
$U_z$ in such a way that for each category $A$ of a point of $U_i$,
the augmented set contains $\min\{k, |A \cap C_i|\}$ points of $A \cap
C_z$, and then taking the union of these augmented sets. Finally for
all other matroids, $T$ is obtained by first augmenting each
independent set $U_z$ of size $|U_z| <k$ to the entire cluster $C_z$,
and then again taking the union of the $U_z$'s (see Procedure
\algo{extract} in Algorithm~\ref{alg-sequential}).
 
\begin{algorithm}
\caption{{\sc SeqCoreset}$(S,k,\epsilon)$}
\label{alg-sequential}
  \DontPrintSemicolon
Let $S=\{x_1,x_2, \ldots, x_n\}$\;
\Let{$z_1$}{$x_1$}\; 
\Let{$z_2$}{$\argmax_{x\in S}\{d(z_1,x)\}$}\;
\Let{$\delta$}{$d(z_1,z_2)$}\;
\Let{$Z$}{$\{z_1,z_2\}$}\;
\lFor{$i\in \{1,2\}$}{%
\Let{$C_{z_i}$}{$\{x\in S: z_i =\argmin_{z'\in Z}\{d(x,z')\} \}$}
}
\Let{$C$}{$\{C_{z_1},C_{z_2}\}$}\;
\lWhile{$\left(r(C,Z)>\epsilon\delta/(16k)\right)$}{%
\Let{$(C,Z)$}{{\sc gmm-iteration}$(S,Z)$}
}
\lFor{$z\in Z$}{%
\Let{$U_z$}{{\sc extract}$(C_z,k)$}
}
\Return $T=\cup_{z \in Z}U_z$\;

\vspace*{0.3cm}
{\bf procedure} {\sc gmm-iteration}$(S,Z)$\;
\Let{$y$}{$\argmax_{x\in S}\{\min_{z\in Z}\{d(x,z\}\}$}\; 
\Let{$Z$}{$Z\cup\{y\}$}\;
\lFor{$z\in Z$}{%
\Let{$C_z$}{$\{x\in S: z = \argmin_{z'\in Z}\{d(x,z')\}\}$}\;
\Let{$C$}{$\{C_z: z\in Z\}$}
}
\Return $(C,Z)$

\vspace*{0.3cm}
{\bf procedure} {\sc extract}$(C,k)$\;
\Let{$U$}{maximal independent set in $C$ of size $\leq k$}\;
\lIf{\rm ($(|U|=k)$ $\vee$ (\emph{matroid type = partition}))}{\Return $U$}
\Else{
\Switch{(matroid type)}{
\Case{transversal}{
\If{$(\exists \mbox{ \rm Category $A$ of $x\in U$}: |A \cap U|<k)$}
{
add to $U$ extra points from $C$ so to have $\min\{k, |A\cap C|\}$ points of Category $A$ in $U$
}
}
\lCase{other}{
\Let{$U$}{$C$}
}
}
}
\Return $U$
\end{algorithm}

\noindent  We have:
\begin{theorem}
\label{thm:seqcoreset}
Let $\epsilon <1$ be an arbitrary positive constant.  The above
algorithm computes a $(1-\epsilon)$-coreset $T$ for the DMMC problem
in time $\BO{n\tau}$.  For the partition (resp., transversal)
matroid, $T$ has size $\BO{k \tau}$ (resp., $\BO{k^2 \tau}$). If the
set $S$ has constant doubling dimension $D$, then $\tau =
\BO{(k/\epsilon)^D}$.
\end{theorem}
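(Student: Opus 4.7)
My plan is to verify separately: (a) the clustering $(\mathcal{C},Z)$ produced by the while-loop meets the radius hypothesis of Theorems~\ref{thm:partition-proxy}--\ref{thm:general-proxy}, so that the resulting $T$ is a $(1-\epsilon)$-coreset of the stated size; (b) the work is $O(n\tau)$; and (c) for bounded doubling dimension, the while-loop terminates within $\tau = O((k/\epsilon)^D)$ iterations.

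The delicate point of (a) is that the algorithm's stopping threshold $\epsilon\delta/(16k)$ uses the locally computable surrogate $\delta = d(z_1,z_2)$, whereas the target is $(\epsilon/4)\rho_{S,k}$, which depends on the unknown quantity $\rho_{S,k}$. However, the \algo{gmm} choice of $z_2$ as a farthest point from $z_1$ together with the triangle inequality applied to the diameter endpoints immediately yields $\Delta_S/2 \le \delta \le \Delta_S$, and a quick case check on the five coefficients in Lemma~\ref{lem:diameter} gives $\rho_{S,k} \ge \Delta_S/(4k)$ uniformly across all variants (the binding case is star-DMMC). Chaining these inequalities yields $\epsilon\delta/(16k) \le \epsilon\Delta_S/(16k) \le (\epsilon/4)\rho_{S,k}$, so the clustering extracted by the loop automatically satisfies the hypothesis of Theorems~\ref{thm:partition-proxy}, \ref{thm:transversal-proxy}, and \ref{thm:general-proxy}. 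Applying the relevant one of these theorems according to the matroid branch taken inside \algo{extract} yields both the $(1-\epsilon)$-coreset guarantee and the size bounds $O(k\tau)$ (partition) and $O(k^2\tau)$ (transversal).

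For (b), standard bookkeeping in which each input point stores the distance to its current nearest center reduces one invocation of \algo{gmm-iteration} to $O(n)$ work, so the while-loop contributes $O(n\tau)$ in total. The \algo{extract} procedure uses the constant-time independence oracle to greedily build a maximum independent subset of size at most $k$ within each cluster, costing $O(|C_z| \cdot k)$ per cluster and $O(nk)$ overall; the transversal augmentation step adds only a $\mathrm{poly}(k)$ factor per cluster. Since in the regime of interest $\tau = \Omega(k)$ these contributions are absorbed into $O(n\tau)$. For (c), I invoke Theorem~\ref{thm:coreset-doubling-dimension} with $\sigma = 2$, the approximation ratio of \algo{gmm} for min-radius clustering: setting $\tau = (64k/\epsilon)^D = O((k/\epsilon)^D)$ forces the minimum achievable $\tau$-clustering radius below $\epsilon\delta/(16k)$, so the loop must exit within that many iterations.

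The main obstacle I foresee is the uniform radius-to-farness translation in step (a): the algorithm commits to a single constant ($1/16$) independent of the specific diversity function, so one must confirm it works simultaneously across all five variants by extracting the worst-case coefficient from Lemma~\ref{lem:diameter}. Once that case analysis is done, the remainder of the argument is essentially a mechanical composition of prior results from Section~\ref{sec:coresets}.
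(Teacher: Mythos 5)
Your proof follows essentially the same route as the paper's: the chain $\epsilon\delta/(16k) \le \epsilon\Delta_S/(16k) \le (\epsilon/4)\rho_{S,k}$ via the worst-case coefficient of Lemma~\ref{lem:diameter}, then Theorems~\ref{thm:partition-proxy} and~\ref{thm:transversal-proxy} for the coreset guarantee and size, linear-time \algo{gmm} iterations plus $\BO{n}$ oracle calls for the work bound, and Theorem~\ref{thm:coreset-doubling-dimension} combined with the 2-approximation of \algo{gmm} for the granularity bound. The only slip is in part (c): setting $\tau = (64k/\epsilon)^D$ drives the \emph{optimal} $\tau$-clustering radius below $\epsilon\delta/(16k)$, but the while-loop tests \algo{gmm}'s radius, which is only bounded by $2\,r_\tau^*(S) \le \epsilon\Delta_S/(16k)$ and may still exceed $\epsilon\delta/(16k)$ when $\delta < \Delta_S$; taking $\tau = (128k/\epsilon)^D$ (as the paper does) repairs the constant, and the asymptotic claim $\tau = \BO{(k/\epsilon)^D}$ is unaffected.
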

\begin{proof}
It is easy to see that the distance $\delta$ between the first two centers
selected by \algo{gmm} is such that $\Delta_S/2 \le \delta \le \Delta_S$,
thus, by Lemma~\ref{lem:diameter}, the radius of 
the $\tau$-clustering is
$\epsilon \delta/(16k) \le \epsilon \Delta_{S}/(16k) \leq
\epsilon\rho_{S,k}/4$. The fact that $T$ is $(1-\epsilon)$-coreset 
for the DMMC problem with the stated sizes then
follows by Theorems~\ref{thm:partition-proxy} and
~\ref{thm:transversal-proxy}.
As for the running time, the cost of the
\algo{gmm} algorithm is $\BO{n\tau}$, while, the subsequent extraction of the
coreset can be accomplished using a total of $\BO{n}$ invocations of
the independent-set oracle to determine a largest independent set
in each cluster, exploiting the augmentation property, and additional
$\BO{nk} = \BO{n\tau}$ operations on suitable dictionary structures to
determine the extra elements for each category (in the sole case of
the transversal matroid).  Finally, the bound on the cluster granularity can be
established by noticing that for $\tau = (128k/\epsilon)^D$,
Theorem~\ref{thm:coreset-doubling-dimension} together with the fact that
\algo{gmm} is a 2-approximate clustering algorithm, implies that
after this many iterations of \algo{gmm} the
radius of the clustering is at most
\[
\frac{4 \Delta_S}{(\tau)^{1/D}} 
\le 
\frac{\epsilon \delta}{16k}.
\]
Note that for constant $D$, $\tau = \BO{(k/\epsilon)^D}$.
\end{proof}
It is conceivable that the large constants involved in the coreset
sizes are an artifact of the analysis since the experiments reported
in Section~\ref{sec:experiments}, show that much smaller coresets yield very accurate
solutions. We also wish to stress that, thanks to the incremental
nature of \algo{gmm}, the coreset construction needs not know the
doubling dimension $D$ of the metric space in order to attain the
desired bound on $\tau$.

\subsection{MapReduce coreset construction}
\label{sec:mr-implementation}
A MapReduce (MR) algorithm executes as a sequence of \emph{rounds}
where, in a round, a multiset of key-value pairs is transformed into a
new multiset of pairs through two user-specified map and reduce
functions, as follows: first the map function is applied to each
individual pair returning a set of new pairs; then the reduce function
is applied independently to each subset of pairs having the same key,
again producing a set of new pairs.  Each application of the reduce
function to a subset of same-key pairs is referred to as
\emph{reducer}.  The model is parameterized by the total memory
available to the computation, denoted with $M_T$, and by the maximum
amount of memory locally available to run each map and reduce
function, denoted with $M_L$.  The typical goal for a MR algorithm is
to run in as few rounds as possible while keeping $M_T$ (resp., $M_L$)
linear (resp., substantially sublinear) in the input size
\cite{Dean2004,KarloffSV10,PietracaprinaPRSU12}.

To obtain MR implementations of our $(1-\epsilon)$-coreset constructions we will
crucially exploit an additional property of these constructions, known
as \emph{composability} \cite{IndykMMM14}. Formally, composability
ensures that a $(1-\epsilon)$-coreset for a set $S$ can be obtained as
the union of $(1-\epsilon)$-coresets extracted from each subset of any
given partition of $S$.  Therefore, a coreset $T$ can be built in one
MR round as follows.  First, the set $S$ is partitioned evenly but
arbitrarily into $\ell > 0$ disjoint subsets $S_1, \dots, S_\ell$,
through a map function, where $\ell$ is a design parameter (corresponding to the degree of parallelism) to be set
in the analysis.  Then, each $S_i$ is assigned to a distinct reducer,
which builds a $(1-\epsilon)$-coreset $T_i$ for $S_i$ based on a
$\tau_i$-clustering of radius at most $\epsilon \delta_i/(16k)$, by running 
 {\sc SeqCoreset}$(S_i,k,\epsilon)$ (see Algorithm~\ref{alg-sequential}),
where $\delta_i \in [\Delta_{S_i}/2,\Delta_{S_i}]$ now represents the
distance between the first two centers selected by the algorithm. 
Coreset $T$ is simply the union of the $T_i$'s.

\begin{theorem}
\label{thm:MRcoreset} 
Let $\epsilon <1$ be an arbitrary positive constant and let $\tau =
\sum_{1 \leq i \leq \ell} \tau_i$.  The above 1-round MR algorithm
computes a $(1-\epsilon)$-coreset $T$ for the DMMC problem with memory
requirements $M_T=\BO{n}$ and $M_L =\BO{n/\ell}$.  For the partition
(resp., transversal) matroid, $T$ has size $\BO{k \tau}$ (resp.,
$\BO{k^2 \tau}$).  If the set $S$ has constant doubling dimension $D$,
then $\tau = \BO{\ell (k/\epsilon)^D}$.
\end{theorem}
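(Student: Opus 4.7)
The plan is to reduce the claim to a local application of Theorem~\ref{thm:seqcoreset} in each reducer together with a composability argument that glues the local coresets into a global one. The central observation is that although each reducer runs {\sc SeqCoreset} only on its chunk $S_i$, the resulting clustering is already fine enough for the \emph{global} instance. Concretely, the first two centers chosen by \algo{gmm} on $S_i$ satisfy $\delta_i \le \Delta_{S_i} \le \Delta_S$, so the local $\tau_i$-clustering has radius at most $\epsilon \delta_i/(16k) \le \epsilon\Delta_S/(16k)$, and by Lemma~\ref{lem:diameter} this is at most $\epsilon \rho_{S,k}/4$ for every DMMC variant in Table~\ref{tab:diversity-notions}. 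Hence each $T_i$ is extracted from a clustering that meets the radius hypothesis underlying Theorems~\ref{thm:partition-proxy} and~\ref{thm:transversal-proxy} with respect to the global parameter $\rho_{S,k}$.

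With this in hand, I would establish composability by verifying the hypotheses of Lemma~\ref{lem:coreset-requirements} directly on $T = \bigcup_i T_i$. Given any $X \in \mathcal{I}(S)$ with $|X|=k$, decompose $X = \bigcup_i X_i$ with $X_i = X \cap S_i$; each $X_i$ is independent by the hereditary property. Running the proxy construction from the proofs of Theorems~\ref{thm:partition-proxy} and~\ref{thm:transversal-proxy} inside each $(S_i, T_i)$ yields an injective $p_i : X_i \to T_i$ with $\{p_i(x): x \in X_i\}$ independent in $\mathcal{M}$ and $\dist(x, p_i(x)) \le (\epsilon/2)\rho_{S,k}$ (the latter by the radius bound above). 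The \textbf{main obstacle} is to argue that the global function $p = \bigcup_i p_i$ sends $X$ to a set that is itself matroid-independent: for partition matroids this is immediate because each $p_i$ preserves per-category counts within $S_i$, so globally $p(X)$ has the same category counts as $X$; for transversal matroids, each $p_i$ matches the elements of $p_i(X_i)$ to exactly the categories used in a matching of $X_i$, and concatenating these matchings across reducers produces a valid matching of $p(X)$ in the global bipartite graph of $\mathcal{M}$. Lemma~\ref{lem:coreset-requirements} then yields the $(1-\epsilon)$-coreset property.

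The remaining bookkeeping is routine. Summing local sizes gives $|T| \le \sum_i |T_i| = \BO{k\tau}$ for the partition matroid and $\BO{k^2\tau}$ for the transversal matroid, with $\tau = \sum_i \tau_i$. An even partition with $|S_i| = \BO{n/\ell}$ makes each reducer's run of {\sc SeqCoreset} fit in $M_L = \BO{n/\ell}$ local memory, while the aggregate input is $M_T = \BO{n}$. Finally, for datasets of constant doubling dimension $D$, every $S_i \subseteq S$ also has doubling dimension at most $D$, so Theorem~\ref{thm:seqcoreset} yields $\tau_i = \BO{(k/\epsilon)^D}$ per reducer, summing to $\tau = \BO{\ell(k/\epsilon)^D}$.
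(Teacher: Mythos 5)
Your opening observation --- that each local $\tau_i$-clustering has radius at most $\epsilon\delta_i/(16k)\le\epsilon\Delta_S/(16k)\le(\epsilon/4)\rho_{S,k}$, i.e.\ small with respect to the \emph{global} average farness --- is exactly the right starting point and matches the paper. The gap is in your gluing step. You propose to run the proxy constructions of Theorems~\ref{thm:partition-proxy} and~\ref{thm:transversal-proxy} separately inside each pair $(S_i,T_i)$ and then set $p=\bigcup_i p_i$, arguing that independence of each $p_i(X_i)$ plus per-category (or per-matching) preservation gives independence of $p(X)$. But those proofs do \emph{not} preserve categories, nor matched categories, for points of $X$ falling in clusters of $\mathcal{C}^\ell$, i.e.\ clusters that contain an independent set of size $k$: there the proxies are produced by repeated applications of the augmentation property relative to the partial proxy set built so far, and the augmenting elements may belong to arbitrary categories. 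Consequently $p_i(X_i)$ can contain strictly more elements of some category $A$ than $X_i$ does, and the union over chunks can then exceed the cardinality bound $k_A$ in the partition case, or fail to admit a system of distinct category representatives in the transversal case, even though each $p_i(X_i)$ is independent on its own. A union of independent sets drawn from different chunks is not in general independent, and your stated justification for why it is here does not go through.

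The paper sidesteps this entirely, and so can you: the union of the $\ell$ local clusterings is itself a $\tau$-clustering of the whole set $S$, with $\tau=\sum_i\tau_i$ and radius at most $(\epsilon/4)\rho_{S,k}$, and $T=\bigcup_i T_i$ is \emph{exactly} the coreset that the per-cluster constructions of Subsections~\ref{subsubsec:partition} and~\ref{subsubsec:transversal} would extract from that single global clustering. Hence Theorems~\ref{thm:partition-proxy} and~\ref{thm:transversal-proxy} apply verbatim to the global instance --- their proofs already carry out the augmentation step globally across all clusters --- and no chunk-by-chunk gluing is needed. A second, smaller inaccuracy: a subset $S_i\subseteq S$ need not have doubling dimension at most $D$, since the covering balls must be recentered on points of $S_i$; the paper handles this by covering $S_i$ with $(256k/\epsilon)^D$ balls centered at points of $S$ and then recentering at the cost of doubling the radius, which still yields $\tau_i=\BO{(k/\epsilon)^D}$ for constant $D$.
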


\begin{proof}
The coreset $T=\bigcup_{i=1,\ell}T_i$
computed by the algorithm can be regarded as being derived from a
$\tau$-clustering of $S$
of radius at most
\[
\epsilon {\max_{i=1,\ell} \Delta_{S_i} \over 16k}
\leq
\epsilon {\Delta_S \over 16k} \leq 
\epsilon {\rho_{S,k} \over 4},
\]
where the last inequality follows by Lemma~\ref{lem:diameter}.
Hence, by
Theorems~\ref{thm:partition-proxy} and ~\ref{thm:transversal-proxy},
$T$ is $(1-\epsilon)$-coreset for $S$.
The bound on the total and local memory is immediate, since the
sequential algorithm used to extract each $T_i$ runs in linear space.
For what concerns the bound on $\tau$, consider a generic subset $S_i$.
It is easy to adapt the proof of Theorem~\ref{thm:coreset-doubling-dimension} 
to show that for $\tau_i = (256k/\epsilon)^D$,
there is a $\tau_i$-clustering of $S_i$ of radius
$\epsilon \Delta_{S_i}/(128k)$, whose centers are points of $S$ which
may not belong to $S_i$. By recentering each cluster on a point
of $S_i$, the radius at most doubles. Therefore, after $\tau_i$ 
iterations \algo{gmm} returns a clustering of radius at most
\[
\frac{4 \epsilon \Delta_{S_i}}{128k} 
\le 
\frac{\epsilon \delta_i}{16k}.
\]
Hence, $\tau = \sum_{1 \le i \le \ell} \tau_i \le \ell (256k/\epsilon)^D$,
thus for constant $D$, $\tau = \BO{\ell(k/\epsilon)^D}$.
\end{proof}
As will be illustrated in Subsection~\ref{sec:final-algorithm}, in a
second round the coreset $T$ can be gathered in a single reducer which
will then extract the final solution by running on $T$ the best available
sequential approximation algorithm.  The degree of parallelism $\ell$
can then be fixed in such a way to balance the local memory
requirements of both rounds. However, if this balancing results in a
large value of $\ell$ (possibly a
function of $n$), since the approximation algorithm used to extract the
final solution is computationally intesive, the work in the second
round could easily grow too large.  To circumvent this problem, the slow
approximation algorithm can be run on a smaller coreset $T'$ of size
independent of $n$, which can be computed from $T$ using our
sequential coreset construction, at the expense of an extra
$1-\epsilon$ factor in the final approximation ratio.

\subsection{Streaming coreset construction}
\label{sec:stream-implementation} 

In the streaming setting \cite{HenzingerRR98} the computation is
performed by a single processor with a small-size working memory, and
the input is provided as a continuous stream of items which is usually
too large to fit in the working memory. Typically, streaming
strategies aim at a single pass on the input but in some cases few
additional passes may be needed.  Key performance indicators are the
size of the working memory and the number of passes.

In this subsection, we describe a 1-pass streaming algorithm, dubbed
{\sc StreamCoreset} which implements the $(1-\epsilon)$-coreset
construction for the DMMC problem described in
Section~\ref{subsec:construction}.  At the core of the algorithm
is the computation of a set $Z$ of centers implicitly defining a
$|Z|$-clustering of radius at most $\epsilon \Delta_S/(16k)$, which is 
in turn upper bounded by $\epsilon \rho_{S,k}/4$ by  Lemma~\ref{lem:diameter}.
Specifically, $Z$ is obtained by combining a center
selection strategy akin to the one presented in
\cite{CharikarCFM04}, with a progressive estimation of $\Delta_S$,
needed to obtain the desired bound on the cluster radius.  As
$Z$ is computed, a set of extra points, dubbed \emph{delegates}, 
is selected from each of the clusters induced by $Z$ so that, at the end of the stream, 
all selected points provide the desired coreset for each matroid type.

Algorithm {\sc StreamCoreset} works as follows (see
Algorithm~\ref{alg-streaming} for the complete pseudocode). The algorithm maintains the following variables: an estimate
$R$ for the diameter of the first $i$ points of the stream, the
current set of centers $Z$, and a set of delegates $D_z$ for each
center $z \in Z$. For $1
\leq i \leq n$, let $x_i$ denote the $i$-th point of $S$ in the
stream. Initally, we set $R=d(x_1,x_2)$, $Z=\{x_1,x_2\}$,
and $D_{x_i}=\{x_i\}$, for $i=1,2$.  Let $c>0$ be a suitable constant
which will be set in the analysis. For $i\geq 3$, the processing of
point $x_i$ is performed as follows.  Let $z$ be the center in $Z$
closest to $x_i$.  If $d(x_i,z) > 2\epsilon R/(ck)$, then $x_i$ is
added to $Z$ and a new set of delegates $D_{x_i} = \{x_i\}$ is
created. Otherwise, $x_i$ triggers an update of the delegate set $D_z$,
carried out by calling procedure {\sc Handle}$(x_i,z,D_z)$, whose
actions vary with the matroid type and will be described later.  Also,
if $d(x_i,x_1) > 2R$, the diameter estimate is updated by setting
$R=d(x_i,x_1)$. Finally, if this latter update occurs, a restructuring
of $Z$ takes place, where $Z$ is shrunk to a maximal subset $Z'$ of
centers at distance greater than $\epsilon R/(ck)$ from one another,
and the delegate set of each discarded center $z \in Z-Z'$ is
``merged'' into the delegate set of its closest center $z' \in Z'$ by
invoking {\sc Handle}$(x,z',D_{z'})$ on each $x \in D_z$.

Given a point $x$, a center $z \in Z$, and its associated delegate set
$D_z$, procedure {\sc Handle}$(x,z,D_z)$ first checks whether $D_z$ is
an independent set of size $k$ and discards $x$ if this is the case.
Otherwise, $D_z$ is updated differently according to the matroid type.
For the partition matroid, $x$ is added to $D_z$ only if $D_z \cup
\{x\}$ is an independent set (clearly of size at most $k$). For the
transversal matroid, $x$ is added to $D_z$ only if one of the
categories of $x$ is still short of $k$ delegates in $D_z$, while
for any other matroid type, $x$ is always added to $D_z$. In all
cases, after $x$ is added to $D_z$, if $D_z$ contains an independent
set $D'$ of size $k$, $D'$ becomes the new $D_z$ and all other points
in $D_z-D'$ are discarded.

\begin{algorithm}
\caption{{\sc StreamCoreset}$(S,k,\epsilon,c)$}
\label{alg-streaming}
  \DontPrintSemicolon
Let $S=x_1,x_2, \ldots$\;
\Let{$R$}{$d(x_1,x_2)$}\;
\Let{$Z$}{$\{x_1,x_2\}$}\;
\Let{$D_{x_1}$}{$\{x_1\}$}; \Let{$D_{x_2}$}{$\{x_2\}$}\;
\For{$(i \geq 3)$}{
\Let{$z$}{$\argmin_{w \in Z}\{d(x_i,w)\}$}\;
\If{$(d(x_i,z) > 2\epsilon R/(ck))$}
{
\Let{$Z$}{$Z \cup \{x_i\}$};
\Let{$D_{x_i}$}{$\{x_i\}$}
}
\lElse{{\sc Handle}$(x_i,z,D_z)$}
\If{$(d(x_i,x_1) > 2R)$}{
\Let{$R$}{$d(x_i,x_1)$}\;
\Let{$Z'$}{maximal subset of $Z$ such that \\
\hspace*{0.8cm} $\forall u \neq v \in Z': d(u,v)>\epsilon R/(ck)$}\;
\For{$(z \in Z-Z')$}{
\Let{$z'$}{$\argmin_{w \in Z'}\{d(z,w)\}$}\;
\lFor{$(x \in D_z)$}{{\sc Handle}$(x,z',D_{z'})$}
}
}
}
\Return $T=\cup_{z \in Z}D_z$\;

\vspace*{0.3cm}
{\bf procedure} {\sc Handle}$(x,z,D_z)$\;
\lIf{$((|D_z|=k) \wedge (D_z \in {\mathcal{I}}(S)))$}{discard $x$}
\Else{
\Switch{(matroid type)}{
\Case{partition}{
\lIf{$(D_z \cup \{x\} \in {\mathcal{I}}(S))$}
{\Let{$D_z$}{$D_z \cup \{x\}$}}
\lElse{discard $x$}
}
\Case{transversal}{
\If{$(\exists \mbox{ \rm Category $A$ of $x$}: |A \cap D_z|<k)$}
{
\Let{$D_z$}{$D_z \cup \{x\}$}\;
\If{$(\exists D' \! \subseteq \! D_z:D' \in {\mathcal{I}}(S) \wedge |D'|=k)$} 
{\Let{$D_z$}{$D'$}\;
discard all other points}
}
\lElse{discard $x$}
}
\Case{other}{
\Let{$D_z$}{$D_z \cup \{x\}$}\;
\If{$(\exists D' \! \subseteq \! D_z:D' \in {\mathcal{I}}(S) \wedge |D'|=k)$} 
{\Let{$D_z$}{$D'$}\;
discard all other points}
}
}
}
\end{algorithm}
In order to analyze the algorithm, we need to make some preliminary
observations and introduce some notation.  Let $Z_i$ be the set of centers
after processing the $i$-th point of the stream. For each point $x_j$
we define the sequence of centers $z^{(j)}_{x_j}, z^{(j+1)}_{x_j},
\ldots, z^{(n)}_{x_j}$, with $z^{(i)}_{x_j} \in Z_i$ for each $i\geq j$, as follows:
$z^{(j)}_{x_j}$ is the center in $Z_j$ closest to $x_j$ (possibly
$x_j$ itself); for $i > j$, if $z^{(i-1)}_{x_j}\in Z_i$  then $z^{(i)}_{x_j} = z^{(i-1)}_{x_j}$, otherwise
$z^{(i)}_{x_j}$ is the center of $Z_i$ closest to $z^{(i-1)}_{x_j}$.
For each $x_j$ we say that $z^{(n)}_{x_j}$ is its \emph{reference center}.
Let $Z=Z_n$ be the set of centers at the end of the algorithm. For
every $z \in Z$, let $C_z$ denote the set all points $x \in S$ for
which $z$ is the reference center (note that $z$ is the reference center for itself).  Clearly, $({\mathcal{C}}=\{C_z : z
\in Z\},Z)$ is a $|Z|$-clustering of $S$. Also, by observing that every
time a delegate $x$ is transferred from a set $D_z$ to a set $D_{z'}$ its
reference center becomes $z'$, it is immediate to conclude that
at the end of the algorithm, $D_z \subseteq C_z$ for each $z \in Z$.

\begin{lemma} \label{lem:zclustering}
Consider the execution of {\sc StreamCoreset}$(S,k,\epsilon,c)$ 
for an arbitrary positive constant $\epsilon$ and $c=32$.
The $|Z|$-clustering $({\mathcal{C}},Z)$ defined above
has radius $r({\mathcal{C}},Z) < (\epsilon/4) \rho_{S,k}$.
Also, if the set $S$ has constant doubling dimension $D$, 
then $|Z| = \BO{(k/\epsilon)^D}$. 
\end{lemma}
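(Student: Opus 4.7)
The plan is to argue the two claims separately: the radius bound $r(\mathcal{C},Z)<(\epsilon/4)\rho_{S,k}$, and the size bound $|Z|=\BO{(k/\epsilon)^D}$ under bounded doubling dimension. Both will rely on the doubling behavior of the threshold parameter $R$ (each update more than doubles it) and on the per-variant lower bounds for $\rho_{S,k}$ from Lemma~\ref{lem:diameter}.

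For the radius bound, I plan to fix an arbitrary $x_j \in S$ and control $d(x_j, z^{(n)}_{x_j})$ via the telescoping decomposition
\[
d(x_j, z^{(n)}_{x_j}) \le d(x_j, z^{(j)}_{x_j}) + \sum_{i>j:\, z^{(i)}_{x_j}\ne z^{(i-1)}_{x_j}} d(z^{(i-1)}_{x_j}, z^{(i)}_{x_j}).
\]
First I would verify that $d(x_j, z^{(j)}_{x_j}) \le 2\epsilon R_j/(ck)$, where $R_j$ is the value of $R$ right after processing $x_j$. This is immediate when $x_j$ either becomes (and remains) a center or is assigned to its closest center without triggering a restructuring, using the algorithm's acceptance threshold. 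In the subtle case where $x_j$ itself triggers an update, the pre-update $R_j^- < R_j/2$ together with the maximality of $Z'$ lets me absorb the subsequent restructuring displacement into the same bound via a triangle inequality. Every later merge, caused by an $R$-update with new value $R^{(s)}$, contributes at most $\epsilon R^{(s)}/(ck)$ again by maximality of $Z'$. Since $R^{(s+1)} > 2R^{(s)}$, the $R^{(s)}$'s form a geometric progression and the telescoping sum evaluates to at most $2\epsilon R_n/(ck)$. Combining this with $R_n \le \Delta_S$ and Lemma~\ref{lem:diameter}, the choice $c=32$ gives $d(x_j, z^{(n)}_{x_j}) < (\epsilon/4)\rho_{S,k}$ uniformly in $j$, and hence the claimed bound on $r(\mathcal{C},Z)$.

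For the size bound, I would first show that in the final $Z$ all pairwise distances exceed $\epsilon R_n/(ck)$: after the last $R$-update $Z$ is reset to a maximal $Z'$ satisfying this spacing, and any center admitted by the algorithm thereafter sits at distance greater than $2\epsilon R_n/(ck)$ from existing centers by the acceptance criterion. Next I would match this with a lower bound $R_n \ge \Delta_S/4$: letting $D^* = \max_i d(x_1, x_i)$, the point realizing $D^*$ either triggers an update (yielding $R \ge D^*$) or finds the current $R$ already larger than $D^*/2$, so $R_n \ge D^*/2$, and $\Delta_S \le 2D^*$ by the triangle inequality. Since $S$ lies in a ball of radius $\Delta_S \le 4R_n$ around any of its points, applying the definition of doubling dimension $i$ times with $i=\lceil\log_2(8ck/\epsilon)\rceil$ covers $S$ with at most $(16ck/\epsilon)^D$ balls of radius $\epsilon R_n/(2ck)$, each containing at most one center of $Z$. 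With $c=32$ this yields $|Z| = \BO{(k/\epsilon)^D}$.

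The main technical obstacle I anticipate is the careful bookkeeping behind the displacement estimate: it must handle uniformly the initial assignment of $x_j$, a possible concurrent $R$-update at the same processing step, and the ensuing chain of merges triggered by later updates. The strict doubling of $R$ at each update is precisely what keeps the total displacement bounded by $2\epsilon R_n/(ck)$, and a quick check against Lemma~\ref{lem:diameter} confirms that $c=32$ is just tight enough for the worst variant (star-DMMC), in which the ratio $\Delta_S/\rho_{S,k}$ is largest among those in Table~\ref{tab:diversity-notions}.
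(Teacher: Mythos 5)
Your proposal is correct and follows essentially the same route as the paper: it establishes the same three facts (displacement of every point from its reference center is at most $2\epsilon R_n/(ck)$, the sandwich $\Delta_S/4 \le R_n \le \Delta_S$, and the $\epsilon R_n/(ck)$-separation of the final centers) and combines them with Lemma~\ref{lem:diameter} and a doubling-dimension packing bound exactly as the paper does. The only difference is organizational — you telescope the displacement over the geometric sequence of $R$-updates and argue $R_n \ge \Delta_S/4$ directly, where the paper packages the same estimates as three invariants maintained inductively over the stream and invokes Theorem~\ref{thm:coreset-doubling-dimension} plus a pigeonhole step for the size bound.
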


\begin{proof}
Let $S_i = \{x_1, x_2, \ldots, x_i\}$, and let $R_i$ be the value of
variable $R$ after processing $x_i$. Analogously, as defined above,
$Z_i$ is the value of $Z$ after processing $x_i$.  To prove the first
part of the lemma, we show that the following invariants are
inductively maintained for every $i \geq 2$:
\begin{enumerate}
\item
$\Delta_{S_i}/4 \leq R_i \leq \Delta_{S_i}$;
\item
For any two distinct centers $u,v \in Z_i$, $d(u,v) > \epsilon R_i/(ck)$;
\item
For every $h \leq i$, it holds
$d(x_h,z^{(i)}_{x_h}) < 2\epsilon R_i/(ck)$.
\end{enumerate}
Initially ($i=2$) the invariants trivially hold for any $\epsilon <1$
and $c \geq 1$. Assume now that the invariants hold after processing
$x_{i-1}$ and consider the processing of $x_i$.  Consider first
Invariant~1, and note that
\begin{eqnarray*}
\Delta_{S_i} & = & \max\{\Delta_{S_{i-1}}, \max_{j<i} \{d(x_i,x_j)\}\} \\
& \leq & \max\{4R_{i-1}, \max_{j<i} \{d(x_i,x_1)+d(x_1,x_j)\}\},
\end{eqnarray*}
where the last passage follows from Invariant 1 at step $i-1$
and from the triangle inequality.
We distinguish two cases.  In case $R_i=R_{i-1}$, we have
that $d(x_i,x_1) \leq 2R_i$ and, moreover, for any
$j < i$, $d(x_1,x_j) \leq 2R_i$ as well,
since the algorithm enforces that $d(x_1,x_j) \leq
2R_j$, and $R_j \leq R_i$.
Therefore, by the above relation it follows that 
$\Delta_{S_i}/4 \leq R_i$. Also, 
$R_i=R_{i-1} \leq \Delta_{S_{i-1}} \leq \Delta_{S_i}$.
If instead $R_i\neq R_{i-1}$, then $R_i=d(x_i,x_1)$, and an easy induction shows that
$d(x_i,x_1) = \max_{j \leq i}\{d(x_j,x_1)\}$. 
Let $\Delta_{S_i} = d(x_r,x_t)$ for some $1 \leq r, t \leq i$. Then,
$\Delta_{S_i} \leq d(x_r,x_1)+d(x_1,x_t) \leq 2d(x_i,x_1)=2R_i$. 
Since, trivially, $d(x_i,x_1) \leq \Delta_{S_i}$, it follows
that
\[ 
\Delta_{S_i}/4 \leq \Delta_{S_i}/2 \leq R_i \leq \Delta_{S_i},
\]
which ensures that Invariant 1 holds. Invariant 2 is explicitly enforced
by the algorithm. As for Invariant 3, 
we distinguish again two cases.  In case $R_i=R_{i-1}$,
then it is immediate to see that the invariant still holds. 
Otherwise, we have that $R_i = d(x_i,x_1)>2R_{i-1}$. Now, for
any point
$x_h$ with $h \leq i$, if $z^{(i)}_{x_h}=z^{(i-1)}_{x_h}$
then $d(x_h,z^{(i)}_{x_h}) < 2\epsilon R_i/(ck)$ by the
fact that the invariant holds at step $i-1$ and that $R_{i-1} <
R_i$. If instead, $z^{(i)}_{x_h} \neq z^{(i-1)}_{x_h}$ then, by the
triangle inequality, 
\begin{eqnarray*}
d(x_h,z^{(i)}_{x_h}) & \leq & 
d(x_h,z^{(i-1)}_{x_h})+d(z^{(i-1)}_{x_h},z^{(i)}_{x_h}) \\
& \leq & \epsilon (2R_{i-1}+R_i)/(c k)  \\
& < & 2\epsilon R_i/(ck).  \\
\end{eqnarray*}

By fixing $c=32$ we have that the  invariants imply that
at the end of the algorithm, the distance between any $x_j \in S$ 
and its reference center in $Z=Z_n$ is
\[
d(x_j,z^{(n)}_{x_j}) <  2\epsilon R_n/(c k) = \epsilon \Delta_{S}/(16k).
\]
The stated bound on the radius of the clustering
$({\mathcal{C}},Z)$ follows since $\Delta_{S}/(4k) \leq \rho_{S,k}$.

For what concerns the bound on $|Z|$, let $\tau$ be the smallest integer such
that the radius of an optimal $\tau$-clustering of $S$ is
at most $\epsilon \Delta_S/(256k)$. 
By Theorem~\ref{thm:coreset-doubling-dimension} it follows that
\[
\tau \leq 
\left\lceil
\left({512 k \over \epsilon}\right)^D
\right\rceil.
\]
We now prove that $|Z| \leq \tau$. 
If this is were not the case,  by the pigeonhole principle there would be two distinct points $z_1, z_2 \in Z$ belonging to the same
cluster of an optimal $\tau$-clustering. Then, by the triangle inequality
and Invariant 1, we would have that
\[
d(z_1,z_2) \leq {2\epsilon \Delta_S \over 256k }
\leq {\epsilon R_n \over 32k },
\]
thus contradicting Invariant 2.
\end{proof}

\begin{theorem}
\label{thm:streamcoreset}
Consider the execution of {\sc StreamCoreset}$(S,k,\epsilon,c)$ for an
arbitrary positive constant $\epsilon$ and  $c=32$.  Then the returned set  
$T=\bigcup_{z\in Z}D_z$ is a $(1-\epsilon)$-coreset for the DMMC problem. 
The algorithm performs a single pass on the stream and uses a working memory of size $O(|T|)$. If the set $S$ has constant doubling dimension $D$, then we have
that $|T|=\BO{k(k/\epsilon)^D}$, for the partition matroid,
and $|T|=\BO{k^2(k/\epsilon)^D}$, for the transversal matroid.
\end{theorem}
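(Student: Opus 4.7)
My plan is to reduce the claim to Theorems~\ref{thm:partition-proxy}, \ref{thm:transversal-proxy}, and \ref{thm:general-proxy} via Lemma~\ref{lem:zclustering}. That lemma already yields that, at the end of the stream, the centers $Z$ together with the induced partition $\mathcal{C} = \{C_z : z \in Z\}$ (where $C_z$ is the set of points whose final reference center is $z$) form a $|Z|$-clustering of $S$ of radius at most $(\epsilon/4)\rho_{S,k}$, with $|Z| = O((k/\epsilon)^D)$ when $D$ is constant. Thus the clustering-radius precondition of the coreset-construction theorems is already in place, and the task reduces to showing that for each $z \in Z$ the streaming delegate set $D_z$ serves the same role as the subset extracted from $C_z$ by the sequential {\sc extract} procedure.

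For the partition matroid I would exploit monotonicity of $|D_y|$ and of $|D_y \cap A_j|$ along the chain of centers merging into $z$: {\sc Handle} never removes a partition-matroid delegate, so both quantities are non-decreasing, and the cap $k_j$ is absorbing. Together these facts imply that any discarded $x \in C_z \cap A_j$ was rejected because $|D_y \cap A_j|$ had already reached $k_j$ (not because $|D_y| = k$, which the first monotonicity rules out when $|D_z| < k$), forcing $|D_z \cap A_j| = k_j$ at the end. Hence whenever $|D_z| < k$ we have $|D_z \cap A_j| = \min(k_j, |C_z \cap A_j|)$ for every category, making $D_z$ a largest independent set of $C_z$ of size at most $k$, which is exactly the hypothesis used inside the proof of Theorem~\ref{thm:partition-proxy}. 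The general-matroid case is even simpler: {\sc Handle} there never discards unless $D_z$ already holds a size-$k$ independent subset, so at the end $D_z$ is either the entire $C_z$ or a size-$k$ independent subset of $C_z$, matching the hypothesis of Theorem~\ref{thm:general-proxy}. For the transversal matroid the collapse step in {\sc Handle} breaks monotonicity; there I would instead show by induction that the category-saturation property required by the proof of Theorem~\ref{thm:transversal-proxy} survives each collapse, by arguing that every element dropped during a collapse has its proxy role already taken by an element that remains.

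Applying those proxy arguments to $T = \bigcup_{z \in Z} D_z$ then yields the $(1-\epsilon)$-coreset guarantee. The single-pass property and the $O(|T|)$ bound on working memory are immediate from the algorithm's structure. The size bound $|D_z| \le k$ for the partition matroid gives $|T| = O(k(k/\epsilon)^D)$; for the transversal matroid, a combinatorial argument based on the constant-categories-per-element assumption together with the Handle procedure's eager collapse to size $k$ yields $|D_z| = O(k^2)$ and hence $|T| = O(k^2(k/\epsilon)^D)$. I expect the main obstacle to be the transversal case: the absence of monotonicity forces careful bookkeeping to guarantee that elements lost during category-saturation or collapse events never destroy potential proxies for some independent set $X$, and this is where most of the proof's work goes.
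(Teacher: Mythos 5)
Your proposal follows essentially the same route as the paper's proof: invoke Lemma~\ref{lem:zclustering} for the radius and granularity of the clustering induced by the final reference centers, then verify per matroid type that each delegate set $D_z \subseteq C_z$ meets the hypotheses of Theorems~\ref{thm:partition-proxy}, \ref{thm:transversal-proxy} and \ref{thm:general-proxy}, with identical size-bound arguments (at most $k$ delegates per center for partition, $O(k^2)$ via the bounded-categories assumption and eager collapse for transversal). The category-saturation invariant you plan to establish for the transversal case (once a category reaches $k$ delegates this persists across merges and collapses, unless the set has already collapsed to a size-$k$ independent set) is precisely the claim the paper uses in its two contradiction arguments, so the remaining work is the same bookkeeping the paper itself leaves at the ``easy to see'' level.
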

\begin{proof}
Let $Z$ be the final set of centers computed by the algorithm and let
$({\mathcal{C}}=\{C_z : z \in Z\},Z)$ be the $|Z|$-clustering of $S$.
Lemma~\ref{lem:zclustering} shows that the radius of this clustering
is at most $(\epsilon/4) \rho_{S,k}$.  Recall that in
Subsections~\ref{subsubsec:partition}, \ref{subsubsec:transversal},
and \ref{subsubsec:general} we showed how to construct
$(1-\epsilon)$-coresets for the various matroid types starting from a
any clustering of radius at most $(\epsilon/4) \rho_{S,k}$. We now
show that at the end of Algorithm {\sc StreamCoreset} each set $D_z
\subseteq C_z$ complies with the requirements of those
constructions. Then, the fact that $T = \bigcup_{z \in Z}D_z$ is a
$(1-\epsilon)$-coreset will follow immediately from
Theorems~\ref{thm:partition-proxy}, \ref{thm:transversal-proxy}, and
\ref{thm:general-proxy}.

For the partition matroid, consider an arbitrary delegate set $D_z$
and observe that the algorithm ensures that it is an independent
set. Hence, if $|D_z| = k$, then $D_z$ clearly complies with the
construction requirements. If instead $|D_z| < k$, consider a point $x_j$ whose
associated sequence of centers is $\{z_{x_j}^{(i)} \; : \; j \leq i
\leq n\}$, and whose reference center is $z_{x_j}^{(n)} = z$.  Let
$A$ be the category of $x_j$ and let $k_A$ be the cardinality
constraint on $A$.  It is easy to see that if $x_j$ is discarded at
some time $i \geq j$ then every $D_{z_{x_j}^{(\ell)}}$, with $\ell
\geq i$, contains $k_A$ elements of $A$. Thus, $x_j$ cannot contribute
to an independent set of $C_z$ larger than $D_z$.

For the transversal matroid, consider as before an arbitrary $D_z$.
Again, if $D_z$ is an independent set of size $k$, then we are
done. Otherwise, let $Q$ be a largest independent set in $D_z$, (thus
$|Q| < k$).  We first show that $Q$ is also a largest independent set
in $C_z$. If this were not the case, by the augmentation property
there would exist a point $x_j \in C_z-D_z$ such that $Q \cup \{x_j\}$
is an independent set. Let $A$ be one of the categories that can be
associated with $x_j$ to provide a matching witnessing the
independence of $Q \cup \{x_j\}$. Let also $\{z_{x_j}^{(i)} \; : \; j
\leq i \leq n\}$ be the sequence of centers associated with $x_j$
(hence, $z_{x_j}^{(n)} = z$).  It is easy to see that if $x_j$ is
discarded at some time $i \geq j$, then every $D_{z_{x_j}^{(\ell)}}$,
with $\ell \geq i$, must contain $k$ elements of $A$.  Therefore,
$D_z$ contains $k$ points of $A$ and one such point can be added to
$Q$ yielding a larger independent set, which contradicts the
maximality of $Q$. To conclude the proof, we need to show that $D_z$
contains at least $\min\{k,|A \cap C_z|\}$ points for each category
$A$ of a point in $Q$. By contradiction, if $|A \cap D_z| < \min\{k,|A
\cap C_z|\}$ for one such category $A$, there would exist a point $x_j
\in A \cap C_z$ which has been discarded by the algorithm at some time
$i \geq j$ when $A$ had less than $k$ points in
$D_{z_{x_j}^{(i)}}$. This is in contrast with the workings of the
algorithm.

The case of the general matroid follows from an easy inductive
argument, which shows that at the end of the algorithm each $D_z
\subseteq C_z$ is either $C_z$ or an independent set of $C_z$ of size
$k$.

Finally, for what concerns the size of $T$,
Lemma~\ref{lem:zclustering} ensures that $|Z| =\BO{(k/\epsilon)^D}$.
For the case of the partition matroid the claimed bound follows the
fact that at any time a delegate set $D_z$ contains at most $k$
points. Instead, for the transversal matroid, the assumption that each
point belongs to at most a constant number of categories (say $\gamma
\in \BO{1}$) and the fact that a point $x$ is added to a delegate set
$D_z$ only if one of the categories of $x$ has less than $k$
representatives in $D_z$, imply that at any time $|D_z| < \gamma k^2$
since otherwise $D_z$ would contain an independent set of size $k$ and
the algorithm would retain only such an independent set.
\end{proof}

\subsection{Final algorithms}
\label{sec:final-algorithm} 
Let $S$ be the
input set of $n$ points. 
In the previous subsections we presented efficient sequential, MR and
streaming algorithms to construct $(1-\epsilon)$-coresets $T \subseteq S$ for all
variants of the DMMC problem considered in this paper. 
For all
settings and all variants, the final approximation algorithm can be obtained by
running a sequential
$\alpha$-approximation algorithm ${\mathcal{A}}$ on $T$, which will yield an
$(\alpha-\eta)$-approximate solution, where $\eta =
\epsilon/\alpha$. 
We remark that while ${\mathcal{A}}$ may exhibit very
high running time, the advantage of the coreset-based approach is that
the use of ${\mathcal{A}}$ is confined on a much smaller subset of the
input, retaining a comparable approximation quality while enabling the
solution of very large instances. 
In what follows, we concentrate on the partition and transversal
matroids, since for the general matroids no meaningful worst-case
time/space bounds can be claimed, even if we believe that our approach
can be of practical use even in the general case.  

\subsubsection{Sequential and streaming algorithms}
Theorem~\ref{thm:seqcoreset} shows that a $(1-\epsilon)$-coreset $T
\subseteq S$ can be computed in time $\BO{n|T|}$, where $|T| =
\BO{k(k/\epsilon)^D}$ for the partition matroid and
$|T|=\BO{k^2(k/\epsilon)^D}$) for the transversal matroid. For the
sum-DMMC variant, the local-search based, polynomial-time
$(1/2 - \gamma)$-approximation algorithm of \cite{AbbassiMT13}  can serve as
algorithm ${\mathcal{A}}$, yielding (for $\gamma=\epsilon$)
a final $(1/2 - 2\epsilon)$-approximation in polynomial time.  For all other variants, for which no
polynomial-time constant-approximation algorithms are known, we can
run an exhaustive search for the best solution on the coreset $T$,
yielding a $(1-\epsilon)$-approximation in time $\BO{n|T|+|T|^k}$. 
We observe that, in both cases, the dependence on the input size $n$ 
is merely linear and that for small values of $k$, which is
typical for many real-world applications, and for constant 
$\epsilon$ and $D$, the overall running time is within feasible 
bounds even for very large instances.

For the streaming setting, Theorem~\ref{thm:streamcoreset} states that
a $(1-\epsilon)$-coreset $T \subseteq S$ can be computed in one pass
with working memory $\BO{|T|}$, where the sizes of $T$ for the
partition and transversal matroids are the same as those claimed
before for the sequential setting.  Therefore, by running the
algorithm of \cite{AbbassiMT13} or an exhaustive search on $T$ at the
end of the pass, we obtain the same approximation guarantees stated
above.

\subsubsection{MapReduce algorithms}\label{sec:final-mapreduce}
Theorem~\ref{thm:MRcoreset} states that a $(1-\epsilon)$-coreset $T
\subseteq S$ can be computed in one MR round with linear total memory
and $\BO{n/\ell}$ local memory, where $\ell$ is the number of subsets
in the partition of $S$. $T$ has size $\BO{\ell k(k/\epsilon)^D}$
(resp., $\BO{\ell k^2(k/\epsilon)^D}$) for the partition (resp.,
transversal) matroid. By gathering $T$ in one reducer in a second
round, we may apply our novel sequential algorithms to extract the
final solution. Clearly, this second round requires local memory
$\BO{|T|}$. In order to balance the local-memory requirements between
the two rounds we can fix $\ell = \sqrt{n/k}$ (resp., $\ell =
\sqrt{n/k^2}$) for the partition (resp., transversal) matroid yielding
overall local-memory requirements of $\BO{\sqrt{nk} (k/\epsilon)^D}$
(resp., $\BO{\sqrt{n} k (k/\epsilon)^D}$).
(Observe that a better choice of $\ell$ yielding improved bounds on the
local memory could be made if $D$ were known.)  For what concerns the
quality of the solutions, it is easy to see that in this fashion we
can obtain the same spectrum of approximations as in the sequential
and streaming settings.

%%% Local Variables:
%%% mode: latex
%%% TeX-master: "CeccarelloPP"
%%% End:

\section{Experiments}
\label{sec:experiments}

\begin{table}
  \caption{Datasets used in the experimental evaluation, $n$ is the number of elements.}
  \label{tab:datasets}
\begin{tabular}{lrrl}
\toprule
 & $n$ & Matroid $\rank$ & Matroid type \\
\midrule
Wikipedia & 5,886,692 & 100 & transversal \\
Songs & 237,698 & 89 & partition \\
\bottomrule
\end{tabular}
\end{table}

In this section, we report on three sets of experiments run on a
cluster of 16 machines, each equipped with a 18GB RAM and a 4-core
Intel I7 processor, connected by a 10Gbit Ethernet network.  
The first set (Subsection~\ref{sec:exp-sequential}) compares the
performance of our coreset-based approach with the state of the art in
the sequential setting.  The other two explore its applicability to very
large inputs, focusing on the Streaming
(Subsection~\ref{sec:exp-streaming}) and MapReduce models
(Subsection~\ref{sec:exp-mapreduce}), respectively.
The source code of our implementation
is publicly available\footnote{\url{https://github.com/Cecca/diversity-maximization}}.

As testbeds, we use two real-world datasets, whose characteristics are
summarized in Table~\ref{tab:datasets}.  One dataset is derived from a
recent dump of the English
Wikipedia\footnote{\url{https://dumps.wikimedia.org/backup-index.html},
  accessed on 2019-07-20}, comprising 5,886,692 pages.  Each Wikipedia page is
associated to a number of \emph{categories},  out of 1,102,435 overall
categories defined by the Wikipedia users, which naturally induce a
transversal matroid. Observe however that due to the sheer number of
categories, for any reasonable value of $k$, any subset of $k$ pages
would very likely be an independent set, thus making the matroid
constraint immaterial.  As a workaround, we applied the \emph{Latent
  Dirichlet Allocation} model~\cite{Blei2003} to derive a much smaller set of 100
\emph{topics}, which we use as new categories, together with a probability
distribution over these topics for each page. We then assign each
page to the most likely topics (probability $\ge 0.1$), thus
obtaining a transversal matroid of rank 100.  Finally, each page is mapped to a
25-dimensional real-valued vector using the \emph{Global Vectors for Word Representation}
(\texttt{GloVe}) model  \cite{pennington2014glove}.
The other dataset is a set of 237,698
songs~\footnote{\url{http://millionsongdataset.com/musixmatch/}}, each
represented by the bag of words of its lyrics and associated to a
unique \emph{genre}, out of a total of 16 genres. Since genres define
a partition of the dataset, they induce a partition matroid. For each
genre $g$, we fixed the associated cardinality threshold $k_g$ in the
matroid as the minimal nonzero value proportional to the
number of songs of the genre in the dataset, thus obtaining a
partition matroid of rank 89.  Each page is mapped to a sparse vector,
with a coordinate for each of the 5000 words of the dataset's
vocabulary, each set to the number of occurrences of the corresponding
word in the lyrics of the song.
(Along with the source code we also provide the scripts that we used
to preprocess the datasets.)

For both datasets we use as distance the metric version of the
\emph{cosine distance}~\cite{LeskovecRU14}.  All results reported are
obtained as averages over at least 10 runs.  To evaluate the stability
of the solution quality with respect to the initial data layout, the
dataset is randomly permuted before each run. All figures in the pdf
version of the paper can be clicked upon, so to open an online
interactive version which provides additional information about the
experiments pictured in the figure.

As discussed in Section\ \ref{sec:final-algorithm}, for the sum-DMMC
problem we can use the local search algorithm of\ \cite{AbbassiMT13}
to compute the final solution\footnote{%
For any coreset-based
  algorithm studied in this section, the final output is computed
  using the local search algorithm with $\gamma = 0$.}, whereas for
other diversity measures no approximation algorithm is known but we
can run an exhaustive search on a small enough coreset returned by our
strategy. For concreteness, we restrict the attention to the sum-DMMC
problem. In fact, the benefits of our coreset-based approach are
evident for the other problems, where the only alternative (with provable approximation guarantees) to our
strategy is an unfeasible
exhaustive search on the entire input.
 
\subsection{Sequential setting}
\label{sec:exp-sequential}

  \begin{figure*}
    \href{https://cecca.github.io/diversity-maximization/sequential.html}{
      \input{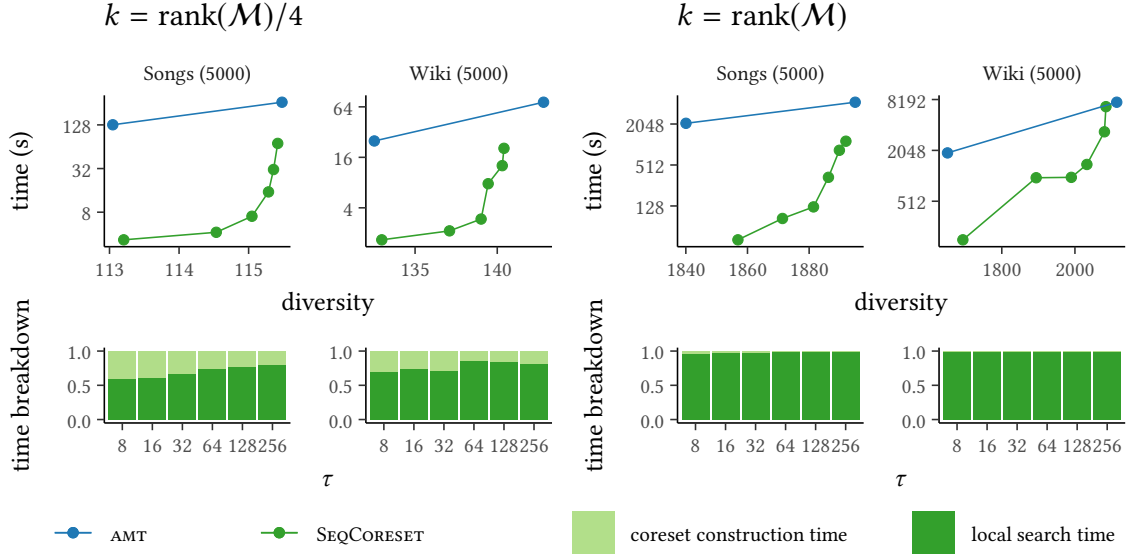}
    }
    \caption{Time vs. diversity for \textsc{amt} and \textsc{SeqCoreset} (top),
and running time breakdown for \textsc{SeqCoreset} (bottom).
In the top plots the y scale is logarithmic, and the $\tau$ parameter increases from left
to right for the \textsc{SeqCoreset} algorithm.
The best performance is towards the bottom-right corner: low running time and high diversity of the
solution.
}
    \label{fig:div-vs-time}
  \end{figure*}

We compare our sequential algorithm \algo{SeqCoreset},
described in Subsection~\ref{sec:seq-implementation},
against
the algorithm in \cite{AbbassiMT13}, 
which we refer to as \algo{amt} in the following.
We recall that \algo{amt} runs a local search over
the entire input, and features a parameter $\gamma$ to limit swaps to
those providing an improvement of a factor at least $(1+\gamma)$
to the current solution quality, thus exercising a tradeoff between
approximation guarantee ($\frac{1}{2}-\gamma$) and running time.
Albeit polynomial for constant $\gamma>0$, \algo{amt} is quite expensive,
since it may require to check a large number of candidate swaps
(possibly quadratic in the input size) where each check entails a call
to the independent set oracle, which can be a costly operation,
depending on the matroid type.  For this reason, in order to keep the
running times of \algo{amt} within reasonable limits,
we tested both algorithms on scaled-down versions of the datasets
obtained as samples of 5,000 elements drawn at random from each dataset.
However, we want to stress that \algo{SeqCoreset} is able to process
the entire datasets within reasonable time bounds, as will be shown by
the MapReduce experiments, where \algo{SeqCoreset} 
represents the case of parallelism 1 (see Subsection~ \ref{sec:exp-mapreduce}).
 
For what concerns \algo{SeqCoreset}, rather than using parameter
$\epsilon$, we control the radius of the clustering underlying the
coreset construction indirectly through the number of clusters $\tau$
to be found by \algo{gmm}, where larger values of $\tau$ yield smaller
radii, and thus correspond to smaller values of $\epsilon$. 
Specifically, we set $\tau$ to powers of two from $8$ to $256$.
For comparison, we ran several instances of \algo{amt} with values of
gamma in the range $[0, 0.9]$, by increments of 0.001.  To avoid
overcrowding the plot with too many points and to gauge the
performance of our algorithm versus its competitor in terms of time
and accuracy, we report the results of two specific runs of
\algo{amt}: the one with the value $\gamma$ returning the largest
diversity (ties broken in favor of fastest running time); and the one
with the value $\gamma$ returning a solution with a quality just below
the lowest one found by \algo{SeqCoreset}. All other tested runs of
\algo{amt} featured running times and diversities between those of
the two extreme runs. Also, for each dataset we used two
values of $k$, namely $k=\rank(\mathcal{M})$ and
$k=\rank(\mathcal{M})/4$, where $\mathcal{M}$ is the associated
matroid.

In the top row of Figure~\ref{fig:div-vs-time}, we plot the running
times of the two algorithms ($y$-axis) against  the diversity yielded by each
parameter configuration  ($x$-axis), so to compare the time taken by each
algorithm to compute a solution of similar quality.  The bottom row of
plots reports, for each parameter configuration of \algo{SeqCoreset},
the breakdown of the running time between the two components of the
algorithm: coreset construction (light green) and local search on the
coreset (dark green).  We observe that our algorithm returns solutions
of quality comparable to the ones returned by \algo{amt} and, as
emphasized by the logarithmic scale, in most cases it runs one or
two orders of magnitude faster. 

In all experiments, the coreset construction performed by
\algo{SeqCoreset} never dominates the overall running time.  For
$k=\rank(\mathcal{M})/4$, the coreset construction takes between 50\%
of the time (for $\tau = 8$) and 20\% of the total time (for
$\tau=256$).  For $k=\rank(\mathcal{M})$, building the coreset takes
negligible time compared to the total time (below 2\%).  This is a
consequence of both the limited input size and the expensive nature of
the local search task. In fact, in the context of the MapReduce experiments
reported in Subsection~\ref{sec:exp-mapreduce}, we also ran 
\algo{SeqCoreset} on both the full Songs and Wikipedia datasets, 
with $\tau=64$. The results, reported in Figure~\ref{fig:mr-experiments},
show that with a much larger dataset the coreset construction task dominates
the running time.

Finally, the shape of the curve in Figure~\ref{fig:div-vs-time}
representing the performance of our algorithm shows that the parameter
$\tau$ can be effectively used to control a tradeoff between accuracy
and running time, while parameter $\gamma$ of \algo{amt} seems less
effective in that respect.

\subsection{Streaming setting}
\label{sec:exp-streaming}

  \begin{figure}
    \href{https://cecca.github.io/diversity-maximization/streaming-h.html}{
      \input{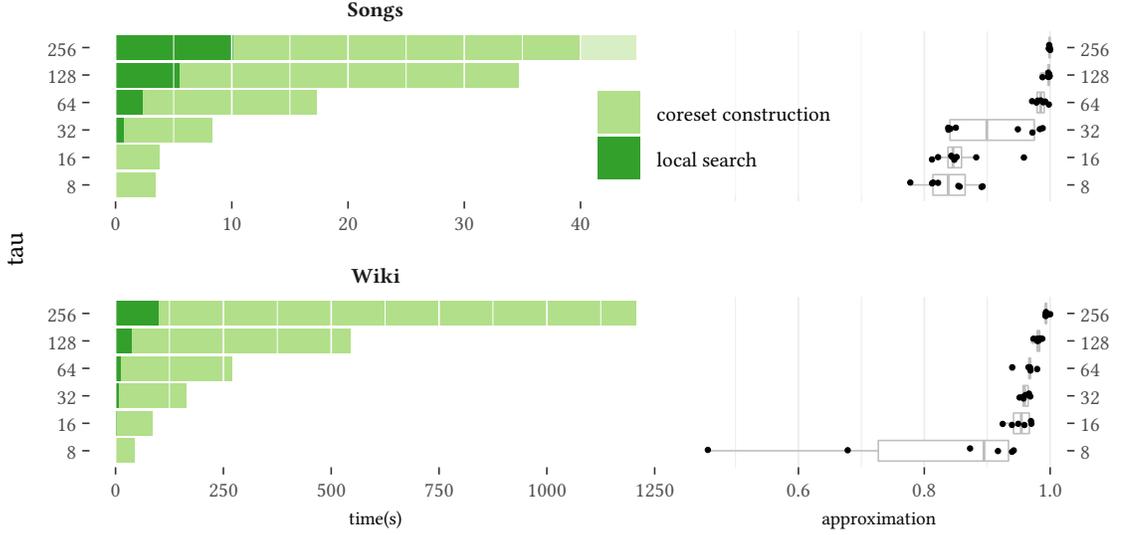}
    }
    \caption{Performance of the streaming algorithm.
For different coreset sizes, the box plots on the right show the diversity attained 
by different configurations of the algorithm. 
The bar charts on the left report the breakdown
of the overall running time.}
    \label{fig:streaming-experiments}
  \end{figure}

We evaluate the performance of our streaming strategy by analyzing the
relationship between coreset size, quality of approximation, and
running time. As done for the sequential setting, we implemented a
variant of {\sc StreamCoreset} (Algorithm~\ref{alg-streaming}) so to
control the number $\tau$ of clusters directly, rather than having the
value of $\tau$ to be determined implicitly as a function of the
approximation parameter $\epsilon$. Controlling the size of the
coreset directly also enables an easier comparison with the MapReduce
results presented in the next section.

To control $\tau$ directly, the implemented variant maintains in
variable $R$ an estimate of the \emph{radius} of the $\tau$-clustering
built so far, rather than an estimate of the diameter of the
dataset. For each point of the stream, if it falls within distance
$2\cdot R$ from any of the centers, it is handled using procedure
\textsc{Handle} of {\sc StreamCoreset}.  Otherwise it is
added as a new cluster center. As soon as there are more than $\tau$
clusters, the algorithm restructures the set of centers as in
{\sc StreamCoreset} and doubles $R$.  This variant is
reminiscent of the $k$-center streaming algorithm by
\cite{CharikarCFM04}, and, using an analysis similar to the one in
Subsection~\ref{sec:stream-implementation}, it can be shown that by
setting $\tau=\BT{(k/\epsilon)^D}$ it returns a
$(1-\epsilon)$-coreset.

We run the algorithm on the full Wikipedia and Songs datasets, fixing
$k=\rank(\mathcal{M})/4$.  As for the coreset size, we run the
algorithm so to build $\tau \in \{8,16,32,64,128,256\}$ clusters, each
containing the appropriate number of delegate points, depending on the
matroid type.

We report the results of these experiments in
Figure~\ref{fig:streaming-experiments}.  While the bars on the left
side of the picture plot the running time for each value of $\tau$,
the box-plots on the right report, for each dataset, the distribution
of the approximation ratios in the various runs.  Such ratios are
computed with respect to the best solution ever found by \emph{any run
  of any algorithm in any setting} (on the same dataset and for the
same value of $k$), with results close to 1 denoting better solutions.
We observe that despite the high dimensionality of the dataset,
increasing the coreset size has a beneficial effect on the solution
quality, at the expense of a roughly linear increase in the running
time. Note also that as the size of the coreset increases, all runs
tend to give solutions with diversity values which are more
concentrated.

We report the results obtained by \algo{StreamCoreset} with $\tau=64$
also in Figure~\ref{fig:mr-experiments} (red bars), to
compare with the performance of the algorithms in the other settings
for a fixed coreset size.
From the figure, we observe that, compared to \algo{SeqCoreset}, the streaming algorithm is around 7 
times faster on the Songs dataset,
and 4 times faster on Wikipedia.
This might be surprising, given that \algo{StreamCoreset} may perform
more distance computations than \algo{SeqCoreset} to build the coreset,
namely $\BT{\tau^2 n}$ instead of $\BT{\tau n}$ in the worst case.
However, this worst case rarely happens in practice. In fact,
\algo{StreamCoreset} is considerably more cache-efficient, since for
each point of the stream it computes the distance with $\BO{\tau}$
cluster centers, which easily fit into cache.  Conversely,
\algo{SeqCoreset} benefits less from data locality, since it iterates
$\tau$ times over the the entire input, not reusing information
already in the cache.
However, despite having a worse running time, \algo{SeqCoreset} yields solutions
of better quality than \algo{StreamCoreset}, as can be observed in the
box plots of Figure~\ref{fig:mr-experiments}.
This is not surprising: \algo{SeqCoreset} derives its coreset starting from the 
2-approximation \algo{gmm} clustering algorithm;
\algo{StreamCoreset}, instead, uses a clustering strategy similar 
to~\cite{CharikarCFM04}, which is an 8-approximation.

% As anticipated in
% Subsection~\ref{sec:exp-sequential}, we remark that since the dataset
% sizes in these experiments are much larger than those used for the
% sequential setting, the running times are dominated by the coreset
% construction task. 

\subsection{MapReduce setting}
\label{sec:exp-mapreduce}

  \begin{figure*}
    \href{https://cecca.github.io/diversity-maximization/mapreduce-h.html}{
      \input{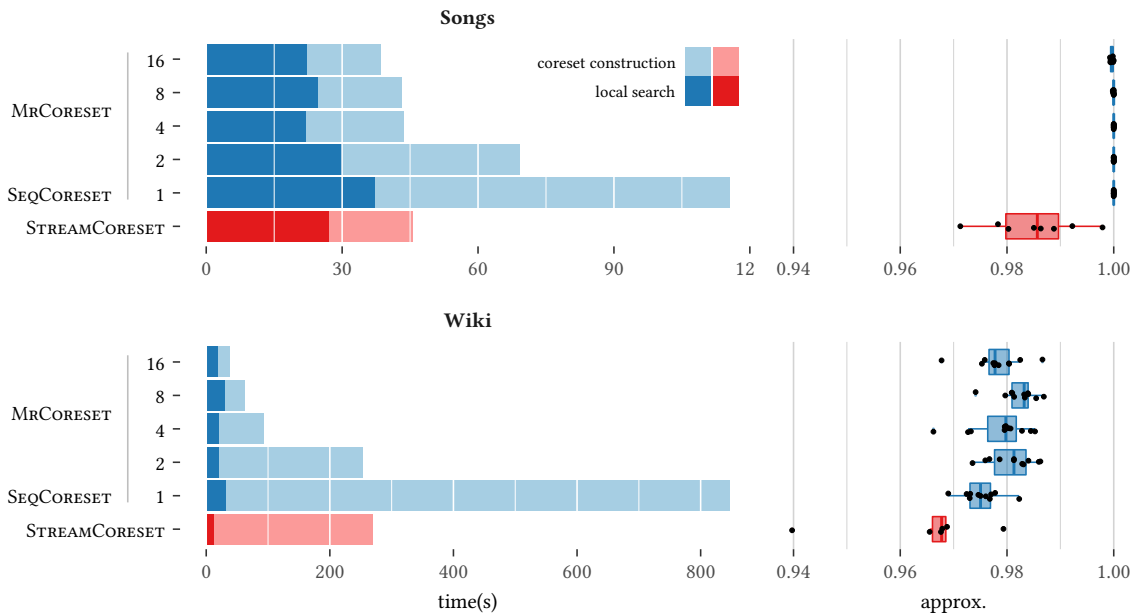}
    }
    \caption{Comparison between all the algorithms, with $\tau=64$,
on the full datasets.
The bars on the left report the running time.
For \textsc{MrCoreset} we report on the performance with 1, 2, 4, 8, and 16 machines.
The results with 1 machine correspond to the performance of \textsc{SeqCoreset}.
The portion of the bars
with saturated colors reports the time employed for the local search task,
the part with dimmed colors reports the coreset construction times.
Note that the scale is linear and that, for readability, each dataset has its own
time scale.
The box plots on the left report the quality of the solution 
found by each algorithm.
}
    \label{fig:mr-experiments}
  \end{figure*}

We implemented the MapReduce version of our coreset-based strategy
described in Section~\ref{sec:mr-implementation} (dubbed
\algo{MRCoreset}), using the Spark framework~\cite{spark}, and ran it
on the full Wikipedia and Songs datasets. As observed at the end of
Subsection~\ref{sec:exp-sequential} and confirmed by the streaming
experiments, with large inputs the main bottleneck of our approach
becomes the coreset construction task, where the whole input is
involved.  However, this task is fully parallelized in
\algo{MRCoreset}, thus yielding scalable performance. To emphasize
this aspect, we fixed a cluster granularity, namely $\tau=64$, which
provides a coreset whose size is small enough to limit the impact of
the local search task, but sufficient to embody a good quality
solution. Then, we ran \algo{MRCoreset}, on $\ell=1, 2, 4, 8$ and 16
machines, setting the number of clusters computed by each machine to
$\tau/\ell$, so to extract the final coreset always from a
$\tau$-clustering. For both datasets, we fixed
$k=\rank(\mathcal{M})/4$. 

By the choice of parameters, the final coreset is small enough that it
does not make sense to apply \algo{SeqCoreset} in the second round, as
described in Section \ref{sec:final-mapreduce}, since
\algo{SeqCoreset} would not reduce the size of the coreset
significantly, while it might obfuscate the evaluation of the impact
of parallelism on the solution quality.  Thus, we ran \algo{amt} with
$\gamma=0$ directly on the coreset computed in the first MapReduce
round, hence making the case $\ell=1$ coincide with
\algo{SeqCoreset} (in fact, in Subsection~\ref{sec:exp-sequential} we used
\algo{MRCoreset} with $\ell=1$ as implementation of
\algo{SeqCoreset}).

The results of our experiments are reported in the four plots of
Figure~\ref{fig:mr-experiments}. The left plots show the running times
broken down into coreset construction and local search time, under the
different levels of parallelism, while the right plots are box-plots
for the approximation ratios attained by \algo{MRCoreset}, computed as
described for the streaming setting. The figure reports also the
performance of \algo{SeqCoreset} and \algo{StreamCoreset} with $\tau = 64$,
so to allow a full comparison of all our algorithms.
Indeed, the bars of these two algorithms show clearly 
that, on large inputs, the majority of the work goes
into the coreset construction, and the bars corresponding to larger
levels of parallelism show that such construction scales well.  This
scalability effect is more evident on the Wikipedia dataset which, due
to its larger size, is able to take better advantage of the available
parallelism.  Not surprisingly, on this larger dataset the
coreset construction scales more than linearly. 
This is due to the fact that the complexity of the clustering 
required to compute the local coresets is roughly inversely proportional
to $\ell^2$, since $\tau/\ell$ clusters must be computed in
subsets of the dataset $S$ of size  $|S|/\ell$.
Importantly, we have that parallelizing the coreset construction,
which may in theory yield coresets of worse quality, does not seem to
affect the quality of the final solution significantly.

As for a comparison with the streaming algorithm, consider the
red bars of the figure.  For the Songs dataset, which is not very
large, the performance of the streaming algorithm (which employs a
single processor) is comparable with the performance of the MapReduce
algorithm with 16 processor.  This is due to the overhead of
communication and synchronization between machines inherent in the
Spark platform. On the Wikipedia dataset, on the other hand,
the benefits of parallelization emerge more evidently, and the
running time of the streaming algorithm is already matched with parallelism
2 but with a better solution quality.

Finally, we remark that the high complexity of \algo{amt} makes it
impractical for such large inputs, thus ruling out a direct comparison
with \algo{MRCoreset}.  However, by pairing the results of the
comparison between \algo{amt} and \algo{SeqCoreset} from
subsection~\ref{sec:exp-sequential} and between \algo{SeqCoreset} and
\algo{MRCoreset} from this subsection, we can infer that the latter
holds the promise to provide solutions of similar quality as
\algo{amt} but in a scalable fashion.

%%% Local Variables:
%%% mode: latex
%%% TeX-master: "CeccarelloPP"
%%% End:

\section{Conclusions}
\label{sec:conclusions}

Coreset-based strategies provide an effective way of 
processing massive datasets
by building a succint summary of the input dataset $S$, which can then be analyzed
with a (possibly computationally-intensive) sequential algorithm of
choice. In this paper, we have seen how to leverage coresets to build
fast sequential, MapReduce and Streaming algorithms for diversity
maximization under matroid constraints  for a wide family of DMMC variants. 
For the sum-DMMC variant, our algorithms feature an accuracy which can be made
arbitrarily close to that of the  state-of-the art (computationally expensive)
sequential algorithms, while for all other variants they provide the 
the first viable $(1-\epsilon)$-approximate solutions in all of the aforementioned computational fameworks,
by confining exhaustive search  to the coreset, whose size is independent of $|S|$.
For the important cases of partition and transversal matroids, and
under resonable assumptions on the dimensionality of the dataset, the algorithms require work linear in the
input size, and, as demonstrated by experiments conducted on real world
datasets, their performance can be orders of magnitude faster than that
of existing algorithms.  Moreover, the Streaming and MapReduce versions of the
algorithms can be effectively employed in big data scenarios where
solutions of very large instances are sought.

In our algorithms the coreset size, which cannot grow too large to
avoid incurring large overheads in the extraction of the final
solution, exhibits an exponential dependency on the doubling dimension
$D$ of the input dataset. A challenging, yet important open problem is
to provide a tighter analysis of our algorithms, if at all possible,
or to develop improved strategies that retain efficiency even for
large values of $D$.

Recall that for the transversal matroid we made the assumption that
each point belongs to a constant number of categories. In fact, our
sequential and MapReduce algorithms can be easily modified to work
without this assumption, while maintaining the same quality.  Another
interesting open problem is to modify our streaming algorithm so that
the assumption can be lifted.

Finally, we wish to remark that our coreset-based approach for diversity maximization
under matroid constraints does not 
yield efficient algorithms for another well-studied variant (dubbed \emph{min-DMMC}), where the
diversity function of a subset of points is defined as the minimum distance between these
points. Devising solutions for  this important variant 
remains an interesting open problem.

\iffalse
\begin{acks}
The authors would like to thank...
\end{acks}
\fi

\bibliographystyle{ACM-Reference-Format}
\bibliography{references}

\end{document}